\newtheorem{theorem}{Theorem}
\newtheorem{lemma}[theorem]{Lemma}
\newtheorem{corollary}[theorem]{Corollary}
\newcommand{\UW}{{Department of Physics, University of Washington, Seattle, WA 98195, USA}}
\newcommand{\UNITN}{{Dipartimento di Fisica, University of Trento, via Sommarive 14, I–38123, Povo, Trento, Italy}}
\newcommand{\TIFPA}{INFN-TIFPA Trento Institute of Fundamental Physics and Applications,  Trento, Italy}
\newcommand{\UT}{{Department of Computer Science, University of Toronto, Toronto, ON M5S 2E4, Canada}}
\newcommand{\PNNL}{{Pacific Northwest National Laboratory, Richland, WA 99354, USA}}
\begin{document}

\title{Fault-tolerant simulation of Lattice Gauge Theories with gauge covariant codes}

\author{Luca~Spagnoli}
\email{luca.spagnoli@unitn.it}
\affiliation{\UNITN}
\affiliation{\TIFPA}
\author{Alessandro~Roggero}
\email{a.roggero@unitn.it}
\affiliation{\UNITN}
\affiliation{\TIFPA}

\author{Nathan~Wiebe}
\email{nawiebe@cs.toronto.edu}
\affiliation{\UT}
\affiliation{\PNNL}
\affiliation{\UW}

\date{2025-12-30}

\begin{abstract}
    We show in this paper that a strong and easy connection exists between quantum error correction and Lattice Gauge Theories (LGT) by using the Gauge symmetry to construct an efficient error-correcting code for an Abelian $\mathbb{Z}_2$ LGT.  We identify the logical operations on this gauge covariant code and show that the corresponding Hamiltonian can be
    expressed in terms of these logical operations
    while preserving the spatial locality of the interactions.
    Furthermore, we demonstrate that these substitutions actually give a new way of writing the LGT as an equivalent hardcore boson model.  Finally we demonstrate a method to perform fault-tolerant time evolution of the Hamiltonian within the gauge covariant code using both product formulas and qubitization approaches.  This opens up the possibility of inexpensive end to end dynamical simulations that save physical qubits by blurring the lines between simulation algorithms and quantum error correcting codes.
\end{abstract}
\maketitle

\section{Introduction}
The field of quantum computation keeps making big steps toward platforms that will be able to perform some calculations with an exponential speedup compared to classical computers. An important example of such calculations could be the simulation of Standard Model physics using Lattice Gauge Theories (LGTs), where classical computers struggle with an exponential complexity, while quantum computers are a promising way to solve the problem. Even though a formal proof that the entire gauge theory describing the Standard Model could be simulated on quantum computers with an exponential speedup is still missing, hints that this might be the case are available since the pioneering work on scalar field theory where the computation of scattering amplitudes can be shown to be BQP-complete~\cite{JLP2012,Jordan2018bqpcompletenessof}. In the past years, many efforts have been made in order to design efficient algorithms for the digital quantum simulations of LGTs for a variety of phenomenologically important gauge groups (for recent reviews see~\cite{Ba_uls_2020,Zohar_2021,Klco_2022,Bauer_PRX_2023})

An important challenge to overcome in order to enable practical large scale simulations using quantum computing platforms is the efficient and scalable control of hardware noise that, if left undisturbed, can lead to a catastrophic accumulation of errors in a quantum simulation. A possible solution to this challenge, inspired by classical error-correcting codes, has emerged since the early days of quantum computing and allows for controllable, and efficient, long time computations by means of Quantum Error Correction (QEC) \cite{ShorQEC,Gottesman1997, Gottesman1998,Knill_qec_98,Terhal2015,Roffe2019}. Much like classical error correction, in QEC one exploits a redundant description of the calculation in order to detect and correct localized errors that might appear during the execution. Since physical theories that describe both fermions as well as gauge fields on a lattice have natural redundancies generated by the presence of symmetries, they provide a fertile ground for exploring special purpose error correction strategies tailored to quantum simulations of these models. Indeed several recent works have explored the possibility of designing error detection and error correction protocols for these physical theories with the goal of both improving the performance of error correction but also as a mean to understand the properties of gauge theories from the point of view of QEC~\cite{Stryker2019,Raychowdhury2020,Faist2020,Klco2021_qec,Kong2022,Rajput2023,delPino2023,gustafson2023robustness,Chen2024,carena2024quantum, wauters2024symmetryprotection}. Inspired by the work of Stryker on error detection oracles for gauge theories employing a direct check of local Gauss' laws~\cite{Stryker2019}, a recent work by some of us described a complete and fault-tolerant error correction scheme for $\mathbb{Z}_2$ lattice gauge theories in one and two spatial dimensions~\cite{Rajput2023}. In this work we extend the QEC scheme presented there in several ways. Our first new contribution is a generalization of the Gauss' Law code of Ref.~\cite{Rajput2023} to arbitrary spatial dimensions and at the same time improving the resource requirements of the original proposals for $2D$. We then characterize the logical operations of the code and show how to leverage the error correction formulation to express the logical Hamiltonian operator in the physical subspace effectively removing the gauge constraint. In the models we consider here, the resulting logical Hamiltonian can be thought of as being obtained integrating out the fermionic degrees of freedom living on the lattice sites. The idea itself is not new and several explicit constructions for mapping a lattice gauge into a spin model by removing the fermionic matter have been proposed (see e.g.~\cite{PhysRevB.98.075119,PhysRevLett.124.120503,PhysRevB.105.075132}). Typically these strategies are tailored to specific spatial dimensions and require representing the fermions in terms of Majorana modes, our approach on the other hand is independent on the spatial dimension and easy to construct. Finally, the original proposal for these Gauss' Law codes in Ref.~\cite{Rajput2023} focused solely on the use of the error-correcting code to store information in a fault-tolerant way, in this work we show how to implement  universal gate set on a Gauss' Law code and provide a complete fault-tolerant algorithm to perform Hamiltonian simulation in this framework. For simulations using product formulas we present a scheme that requires only 3 (logical) ancilla qubits in addition to the memory qubits required to encode the system.

In Section~\ref{section:2}, we define the Hamiltonian and the LGT we take into account, as well as the corresponding gauge operators. In Section~\ref{section:3}, we review the basic concepts behind the stabilizer formalism and we will show how to exploit the gauge symmetry to do error correction. Moreover, we will explicitly build a distance-3 error-correcting code. In Section~\ref{section:4} the Hamiltonian is written in terms of the logical operations of the error-correcting code, reducing the total number of degrees of freedom to those required to span the physical Hilbert space. We will also show that the new Hamiltonian can be written as a function of only hard-core bosonic degrees of freedom. Finally, in Section~\ref{section:5}, we discuss how simulation algorithms can be implemented using Quantum Signal Processing~\cite{PhysRevLett.118.010501,Low2019hamiltonian} and Trotterization~\cite{childs2021theory} in a completely fault tolerant way using the error-correcting code developed in Section~\ref{section:3}. This not only proves that the Gauss' Law can be exploited to correct errors, but also shows how to use it to perform fault-tolerant simulations.

\section{Abelian lattice gauge theories and Gauss' Law}
\label{section:2}
We will consider a $\mathbb{Z}_N$ model Hamiltonian which contains the fermionic mass and hopping terms, the electric field term, and the self-interaction of the gauge field, which translates into the plaquette interaction \cite{Horn1979,Wiese_2013,Zohar2016}. The theory is defined on a cubic lattice in $d$ spatial dimensions with fermions defined on the lattice sites and gauge bosons on the links. We label the sites of the lattice as $S_{\vec{l}}$, where $\vec{l}$ are $d$-dimensional vectors with integer components, and define the set of $d$ unit vectors in the positive directions as $\mathcal{D}$ (e.g., for $d=2$ we have $\mathcal{D}=\{(1,0),(0,1)\}$. With this notation, the Hamiltonian of the system can be written as:
\begin{equation}
    H = H_{M} + H_{hop} + H_{E} + H_{P}
    \label{Full_Hamiltonian}
\end{equation}
The different terms correspond, in order, to a mass term $H_M$, an hopping term $H_{hop}$, an electric term $H_E$ and a plaquette, or magnetic, term $H_P$. The explicit definition of the mass term is as follows
\begin{equation}
H_{M} = m \sum_{\vec{l}} \sigma_{\vec{l}}\; \psi_{\vec{l}}^{\dagger}\psi_{\vec{l}}\;,
\end{equation}
with $\psi_{\vec{l}}$ ($\psi^{\dagger}_{\vec{l}}$) fermionic annihilation (creation) operators acting on the sites $S_{\vec{l}}$ of the lattice and 
\begin{equation}
\sigma_{\vec{l}} = (-1)^{l_1+\dots+l_d}:= (-1)^{|l|}\;,    
\end{equation}
a sign factor. The hopping term is instead given by
\begin{equation}
H_{hop} = \epsilon \sum_{\vec{l}}\sum_{\hat k\in \mathcal{D}} \sigma_{\vec{l},\hat k} \left(\psi_{\vec{l}}^{\dagger} Q_{\vec{l},\hat k} \psi_{\vec{l}+\hat k} + \psi_{\vec{l}+\hat k}^{\dagger} Q^{\dagger}_{\vec{l},\hat k} \psi_{\vec{l}}\right)\;.
\end{equation}
The operator $Q_{\vec{l},\hat k}$ acts on the link $L_{\vec{l},\hat k}$ which connects the site $S_{\vec{l}}$ to the site $S_{\vec{l}+\hat k}$ as a lowering operator in the computational basis
\begin{equation}
\label{eq:Q}
Q \ket{m} = \ket{m-1}\;,
\end{equation}
while $\sigma_{\vec{l},\hat k}$ is a sign factor taking values $(-1)$ in the $1$-direction, $(-1)^{l_1}$ in the $2$ direction and $(-1)^{l_1+\dots+l_{d-1}}$ in the $d$ direction. The electric term can be written as
\begin{equation}
H_{E} = \lambda_E \sum_{\vec{l}} \sum_{\hat k\in \mathcal{D}}\left( P_{\vec{l},\hat k} + P_{\vec{l},\hat k}^{\dagger} \right) \;,
\end{equation}
with the operators $P_{\vec{l},\hat k}$ acting on the links $L_{\vec{l},\hat k}$ as
\begin{equation}
\label{eq:P}
P\ket{m}=e^{i\frac{2\pi}{N}m}\ket{m}\;,
\end{equation}
in the computational basis. The $P$ and $Q$ operators satisfy the $\mathbb{Z}_N$ algebra
\begin{equation}
P^N = Q^N = 1 \quad\quad\quad P^{\dagger} Q P = e^{i\frac{2\pi}{N}} Q\;.
\label{eq:zn_algebra}
\end{equation}
In a $U(1)$ theory, the electric term is a sum of the Casimir operators $E^2_{\vec{l},\hat k}$. The connection with the $Z_n$ case can be done by writing the electric term as
\begin{equation}
    H_E =\lambda_E\sum_{\vec{l}}\sum_{\hat k \in \mathcal{D}} O_{\vec{l},\hat k}
\end{equation}
where the operator $O_{\vec{l},\hat k}$ is in general
\begin{equation}
    O = \sum_{j}f(j)\ket{j}\bra{j}
\end{equation}
The only constraint is on $f(j)$, which as to approach $f(j)\rightarrow j^2$ as $N\rightarrow \infty$. So we can chose
\begin{equation}
    f(j) = 4\sin^2\left( \frac{\pi j}{N} \right)
\end{equation}
and expanding in complex exponentials we get
\begin{equation}
    \sum_j f(j)\ket{j}\bra{j} = \sum_j \left( e^{i\frac{2\pi}{N}j} + e^{-i\frac{2\pi}{N}j} \right) \ket{j}\bra{j} = P + P^\dag
\end{equation}

Finally, the plaquette term takes the form
\begin{equation}
H_{P} = \lambda_P \sum_{p} \left(W_{p} + W^{\dagger}_{p}\right)\;,
\end{equation}
where the sum is over all the plaquettes in the lattice and the operators $W_p=Q_1Q_2Q^\dagger_3Q^\dagger_4$ are formed by four $Q$ operators acting on the links of the plaquette using the ordering convention shown in Fig.~\ref{fig:plaquetteConvention}.
\begin{figure}[t]
    \centering
    \includegraphics[scale=0.3]{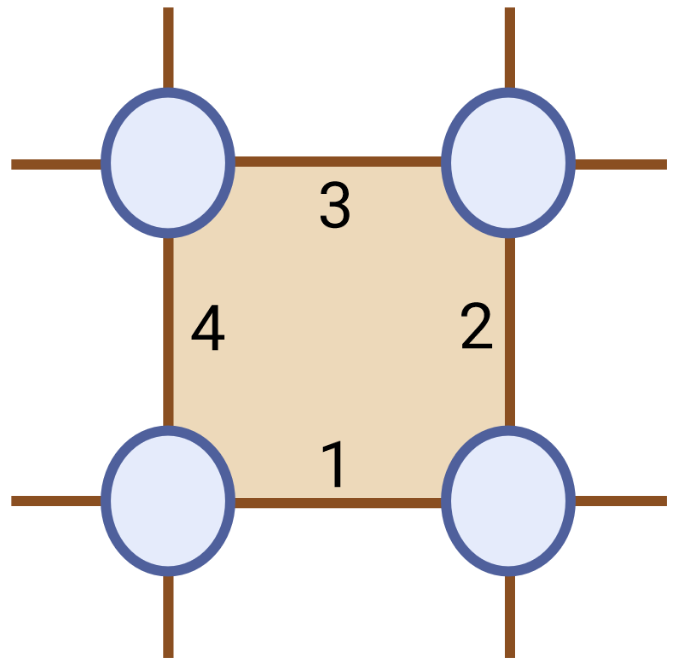}
    \caption{Labelling convention for links in a plaquette.}
    \label{fig:plaquetteConvention}
\end{figure}

\begin{figure}[b]
    \centering
    \includegraphics[scale=0.3]{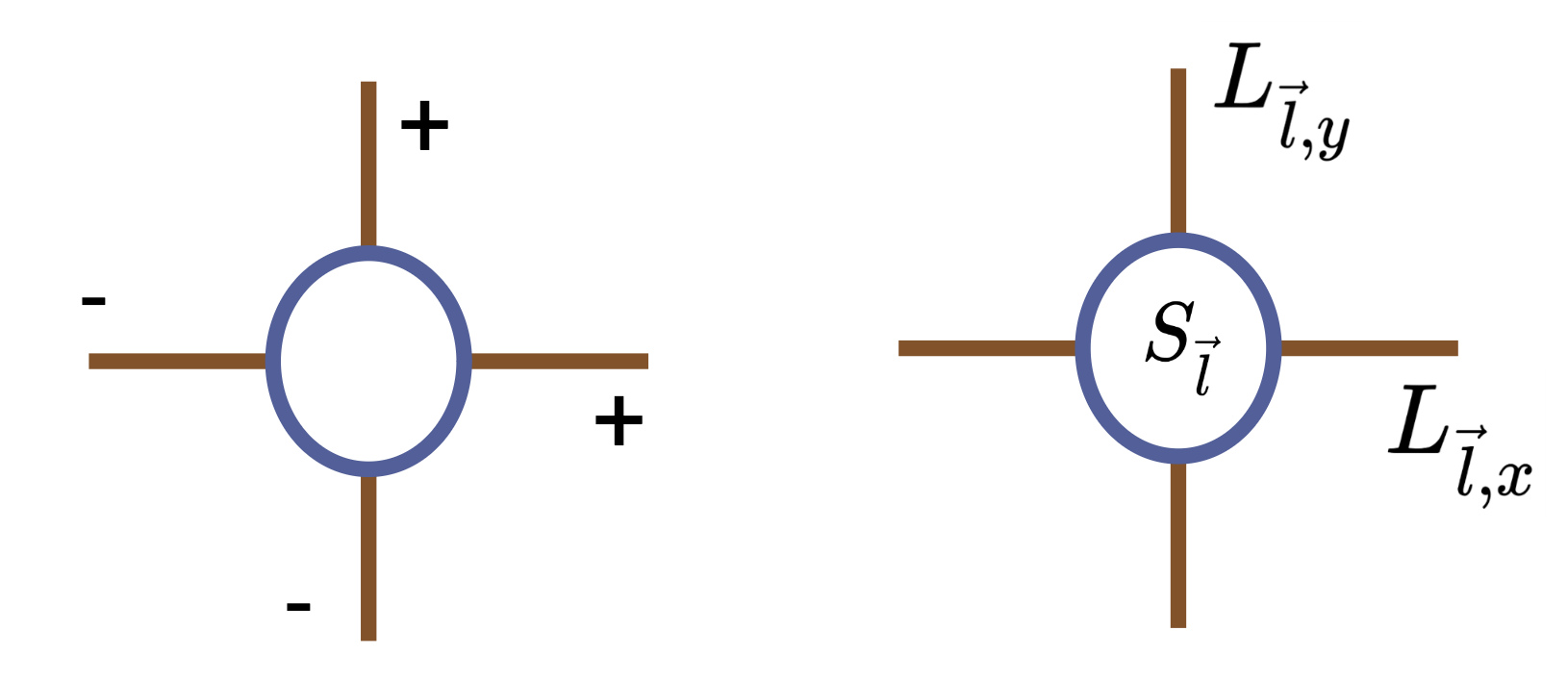}
    \caption{Convention for positive and negative links around a site on the left-hand side, and labelling convention on the right-hand side.}
    \label{fig:linkConvention}
\end{figure}

In terms of the basic building blocks defined above, the Gauss' Law operator can be written as
\begin{equation}
    G_{\vec{l}} = q_{\vec{l}}^{\dagger} \prod_{\hat k\in \mathcal{D}} P^{\dagger}_{\vec{l},\hat k} P_{\vec{l}-\hat k,\hat k}\;,
    \label{GL_theoretical}
\end{equation}
where $q_{\vec{l}}$ is a unitary operator connected with the charge $n_{\vec{l}}$ at a site
\begin{equation}
q_{\vec{l}} = e^{-i\frac{2\pi}{N}n_{\vec{l}}}\quad\quad n_{\vec{l}}=\psi_{\vec{l}}^\dagger\psi_{\vec{l}}-\frac{1}{2}\left( 1- (-1)^{|l|} \right)\;.
\end{equation}
The term $\psi_{\vec{l}}^\dagger\psi_{\vec{l}}$ in $n_{\vec{l}}$ gives $0$ if an even site is empty or an odd site is full, and $1$ in the other cases. The convention used in the definition of Eq.~\eqref{GL_theoretical} is to consider the incoming links into a site to be negative and the outgoing as positive (see Fig.~\ref{fig:linkConvention}). Physical states of the theory, i.e. those that satisfy the local $\mathbb{Z}_N$ symmetry, are thus eigenstates of the operators $G_{\vec{l}}$ at every site of the lattice
\begin{equation}
G_{\vec{l}}\ket{phys}=e^{-i\frac{2\pi}{N}\epsilon_{\vec{l}}}\ket{phys}\quad\forall \vec{l}\;,
\end{equation}
where the $\epsilon_{\vec{l}}$ denote the possible static charge present on the site $S_{\vec{l}}$. Different assignments of these charges define sectors of the physical theory which do not mix between themselves. For simplicity in the rest of the present work we will work in the sector with zero static charges but, if needed, other sectors can be described with the same techniques. In the zero charge sector, physical states obey therefore
\begin{equation}
G_{\vec{l}}\ket{phys}=\ket{phys}\quad\forall \vec{l}\;.
\label{eq:GaussLawPhys}
\end{equation}
This relation can be connected to Gauss' Law by first introducing electric field operator $E_{\vec{l},\hat k}$ as 
\begin{equation}
P^{\dagger}_{\vec{l},\hat k}=e^{i\frac{2\pi}{N}E_{\vec{l},\hat k}}\;,
\end{equation}
Then, Eq.~\ref{eq:GaussLawPhys} becomes:
\begin{equation}
    e^{i\frac{2\pi}{N}\left(\div E_{\vec{l}} - n_{\vec{l}}\right)}\ket{phys} = \ket{phys}
\end{equation}
and it is equivalent to the following condition
\begin{equation}
    \left[ \div E_{\vec{l}} - n_{\vec{l}} \right]_{\hspace{-10pt} \mod N} \ket{phys} = \left[ \sum_{\hat k \in \mathcal{D}} \left(E_{\vec{l}, \hat k} - E_{\vec{l}-\hat k, \hat k} \right) - n_{\vec{l}} \right]_{\hspace{-10pt} \mod N} \ket{phys} = 0
    \label{eq:GaussLawNonUnitary}
\end{equation}
Since Eq.~\eqref{eq:GaussLawPhys} and Eq.~\eqref{eq:GaussLawNonUnitary} are equivalent, we will use the condition in Eq.~\eqref{eq:GaussLawPhys} since it is unitary.

As it is done in \cite{Rajput2023}, we will limit ourselves to the choice of the electric field cutoff equal to $2$, which means that the symmetry group becomes $\mathbb{Z}_2$.  It further allows us to implement the stabilizer operations using low-weight Pauli operations and also allows us to represent the field via a single physical qubit.  In this case the $P$ and $Q$ operators from Eq.~\eqref{eq:P} and Eq.~\eqref{eq:Q}, and the Gauss law operator, can be chosen as
\begin{equation}
    P = Z, \quad Q = X, \quad G_{\vec{l}} = (-1)^{|l|} Z_{S_{\vec{l}}} \prod_{\hat k\in \mathcal{D}} Z_{L_{{\vec{l}},\hat k}}Z_{L_{{\vec{l}}-\hat k,\hat k}}
    \label{pauliZ_representation_PQG}
\end{equation}
where $X,Z$ are the standard Pauli matrices $\sigma_X, \sigma_Z$. In this case of $\mathbb{Z}_2$, those operators $P,Q$, and $G$ are all Pauli operators, which means that we can interpret the set of $G$ as the generator of the stabilizer group. Of course, it is not enough since it commutes with every $Z$ error, but it is not a problem since we can always concatenate two classical error-correcting codes to get a quantum one.

\section{Quantum error correction with Gauss' Law}
\label{section:3}

In this section we introduce the error-correcting code that uses Gauss' law as generators of the stabilizer group generalizing the results derived in Ref.~\cite{Rajput2023}. First, let us recall how stabilizers work: we need an abelian subgroup of the Pauli group that will be the stabilizer group. The generators of this group are then called stabilizers, while the codewords (states protected against errors by the code) are defined as the eigenvectors with eigenvalue $+1$ of every stabilizer operator. By measuring the eigenvalue of those stabilizers (which means extracting the error syndrome) we can understand if an error occurred with the further intention of deducing which qubit the error happened on.  Of course, all errors cannot be detected in this manner: only errors that anti-commute with at least one stabilizer $S_i$ can be measured in this fashion since it will change the eigenvalue of that stabilizer from $+1$ to $-1$ when measured.  Errors that commute with the stabilizer group lead to logical errors, which cannot be corrected.

Now, we will prove that the group generated by the Gauss' Law operators is a stabilizer group of an error-correcting code able to correct every single-qubit $X$ error, which is the same of saying it generates a classical distance-$3$ error-correcting code. 

\begin{theorem}
    Let \(H\) be the Hamiltonian of Eq~\eqref{Full_Hamiltonian}, with a \(\mathbb{Z}_2\) gauge group, and let \(G_{\vec{l}}\) of Eq.~\eqref{GL_theoretical} be the generators of the local symmetry. Given a \(d\)-dimensional lattice with \(N\) sites and \(dN\) links, mapped into \(N+dN\) qubits, a distance $3$ quantum error-correcting code that corrects single-qubit $X$ errors (a classical distance-\(3\) error-correcting code) can be built using only \(G_{\vec{l}}\) as stabilizers for every $N\ge 3^d$.
    \label{theorem_gauss_law}
\end{theorem}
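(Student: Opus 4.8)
The plan is to reduce the statement to a short combinatorial check on the incidence pattern of the Gauss operators; the Hamiltonian $H$ enters only through the hypothesis that the gauge group is $\mathbb{Z}_2$, which by Eq.~\eqref{pauliZ_representation_PQG} makes every $G_{\vec l}$ a product of Pauli $Z$'s. Consequently $\langle G_{\vec l}\rangle$ is a CSS-type stabilizer group with only $Z$-checks, and the code it defines is equivalent to a classical binary linear code: label the $N+dN$ qubits (one per site $S_{\vec l}$, one per link $L_{\vec l,\hat k}$) by the columns of a parity-check matrix $H\in\mathbb{F}_2^{\,N\times(N+dN)}$ whose rows are the generators $G_{\vec l}$, with a $1$ in column $q$ exactly when $G_{\vec l}$ acts as $Z$ on qubit $q$. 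An $X$-error supported on a set $e$ of qubits has syndrome $He$, is undetectable iff $He=0$, and belongs to $\langle G_{\vec l}\rangle$ only if $e=0$ (the group contains no nontrivial $X$-type element); hence the smallest weight of an undetectable, nontrivial $X$-error is exactly the minimum distance of $\ker H$. Correcting every single-qubit $X$ error is therefore equivalent to $d_{\min}(\ker H)\ge 3$, i.e.\ to: (i) every column of $H$ is nonzero, and (ii) the columns are pairwise distinct. Phase errors are invisible to the $G_{\vec l}$, which is the "classical distance-$3$" caveat; a genuine distance-$3$ quantum code then follows by concatenating with a complementary classical code, as already noted.

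Next I would read the incidence data straight off Eq.~\eqref{pauliZ_representation_PQG}: $G_{\vec l}$ acts as $Z$ on the site qubit $S_{\vec l}$, on the $d$ outgoing link qubits $L_{\vec l,\hat k}$, and on the $d$ incoming link qubits $L_{\vec l-\hat k,\hat k}$ for $\hat k\in\mathcal D$. Dually, by columns: the site qubit $S_{\vec l}$ lies in the support of exactly one generator, $G_{\vec l}$, while the link qubit $L_{\vec l,\hat k}$ lies in exactly two, $G_{\vec l}$ and $G_{\vec l+\hat k}$. On the torus --- consistent with the $dN$ links assumed in the statement --- both generators attached to each link exist, so no column is zero and (i) holds at once.

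For (ii) I would compare columns by qubit type. Two distinct site qubits have single-element supports $\{G_{\vec l}\}\neq\{G_{\vec l'}\}$, hence distinct columns. A site qubit (support size $1$) and a link qubit (support size $2$) have columns of different Hamming weight, hence distinct --- here one uses that the two generators $G_{\vec l},G_{\vec l+\hat k}$ attached to a link are actually different, which could fail only on a torus of side $2$ in direction $\hat k$, where $\vec l+\hat k\equiv\vec l-\hat k$; the hypothesis $N\ge 3^d$ (for a hypercubic lattice, side length at least $3$) excludes this. Finally two link qubits $L_{\vec l,\hat k}$ and $L_{\vec l',\hat k'}$ have equal columns iff $\{\vec l,\vec l+\hat k\}=\{\vec l',\vec l'+\hat k'\}$ as pairs of sites: the alignment $\vec l=\vec l'$, $\vec l+\hat k=\vec l'+\hat k'$ forces $\hat k=\hat k'$ (the same link), whereas $\vec l=\vec l'+\hat k'$, $\vec l+\hat k=\vec l'$ forces $\hat k=-\hat k'$, impossible because $\mathcal D$ contains only positive unit vectors and, on a torus of side at least $3$, $-\hat k'$ is not identified with any element of $\mathcal D$. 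This exhausts all pairs, so (ii) holds, $d_{\min}(\ker H)\ge 3$, and the code corrects every single-qubit $X$ error.

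The argument carries essentially no analytic weight; the one place demanding care is the bookkeeping of periodicity, which is precisely why the hypotheses pin down $dN$ links (a torus) and $N\ge 3^d$ (each side at least $3$): these rule out small-torus identifications that would otherwise collapse two geometrically distinct links --- or the two generators straddling one link --- into a single column of $H$ and thereby break distance $3$. I expect the link-to-link comparison to be the only subtle step, since that is where a wrap-around could silently create a repeated column.
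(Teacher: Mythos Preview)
Your proof is correct and follows essentially the same approach as the paper: both establish that single-qubit $X$ errors have pairwise distinct, nonzero syndromes (a site error flips exactly one $G_{\vec l}$, a link error flips exactly the two adjacent ones), and your parity-check-matrix formulation is the standard coding-theory translation of the paper's direct syndrome table. Your handling of the periodicity constraint is in fact more careful than the paper's; the only slip is that the two generators attached to a link coincide only on a torus of side~$1$, not side~$2$ --- side~$2$ instead causes two distinct links in the same direction to share a column, which you already correctly catch in the link--link case.
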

\begin{proof}
For simplicity consider the $1$-dimensional system depicted in Fig.~$\ref{fig:1Dsystem}$.
\begin{figure}[t]
    \centering
    \includegraphics[scale=0.4]{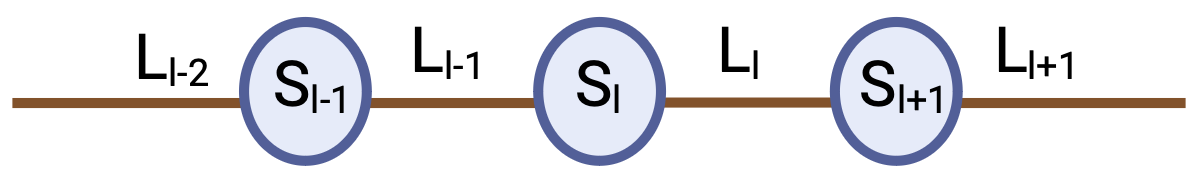}
    \caption{Representation of a 1-dimensional lattice, showing the labelling convention for sites and links.}
    \label{fig:1Dsystem}
\end{figure}
Since the Gauss' law operator on the $l$-th site is $G_l = (-)^l Z_{L_{l-1}} Z_{S_l} Z_{L_{l}}$, it commutes with every $Z$ error, so the stabilizer group generated by the Gauss' law will never be able to correct $Z$ errors. However, if an $X$ error happens on the site $S_l$, the operator $G_l$ will give $-1$ eigenvalue, while if an $X$ error happens on the link $L_l$, both $G_l$ and $G_{l+1}$ will give $-1$ eigenvalue. Indeed, by measuring $G_{l-1}, G_l$, and $G_{l+1}$ we are able to correct every bit-flip error on $L_{l-1}, S_l$, or $L_{l}$, as shown in Table \ref{tab:error_location_1D_single_site}. This is an error-correcting code that, acting on $N$ sites and $N$ links (where every link and every site are one qubit each), encodes $N$ logical qubits. We can say that it is a
\begin{equation}
    [2N, N, 3]
\end{equation}
classical error-correcting code, with $2N$ physical qubits, $N$ stabilizers and so $2N-N$ logical qubits. Since we just proved that it can catch every single-qubit $X$ error, it has distance $3$.

\begin{table}[t]
    \centering
    \begin{tabular}{|c|c|c|c|}
        \hline
        $G_{l-1}$ & $G_l$ & $G_{l+1}$ & error location \\
        \hline
        $+$ & $+$ & $+$ & none \\
        \hline
        $-$ & $-$ & $+$ & $L_{l-1}$ \\
        \hline
        $+$ & $-$ & $+$ & $S_l$ \\
        \hline
        $+$ & $-$ & $-$ & $L_{l}$ \\
        \hline
    \end{tabular}
    \caption{In this table we reported only the errors on $S_l$ and the links attached to it, so where the $G_l$ fails. Other outcomes correspond to multiple errors that we cannot correct, or to errors in $S_{l\pm 1}$ for which we need to look at $G_{l\pm 2}$.}
    \label{tab:error_location_1D_single_site}
\end{table}

In the $2$ dimensional case, looking at Eq.~\eqref{GL_theoretical}, the Gauss' law becomes a weight $5$ operator acting on $1$ site and $4$ links. Again, if only one Gauss' law gives $-1$ means that the error is on the site, while if two neighbouring Gauss' laws give $-1$ the error is on the link they share. It is easy to generalize both the Gauss' law and the decoding to $d$ dimensions as follows: the Gauss' law will always act on $1$ site and all neighbouring links, and if only one Gauss' law gives $-1$ the error is on the site, while if two neighbouring Gauss' law give $-1$ the error is on the link they share. 

As for the parameters of the code, let us remind that in a code with $n$ physical qubits and $n-k$ stabilizers we have $k$ logical qubits. Since we will always have $dN$ links and $N$ sites, we will have a total of $N+dN$ physical qubits and $N$ stabilizers, which leads to $dN$ logical qubits and the code can be written as
\begin{equation}
    [N+dN, dN, 3]
\end{equation}
We assume periodic boundary conditions, and so we need for this code the neighbouring sites we are considering to be different. In the $d=1$ case, every site has $2$ neighbours, which means that we need at least $N=3$. For a general $d$, we need a grid of $3^d$ sites at least.
\end{proof}

We proved that the redundancy given by the Gauge symmetry is enough to build an error-correcting code able to correct one type of errors. Now, we will use it to build two different types of distance-$3$ quantum error-correcting codes by using concatenation.

\begin{lemma}
    Consider the classical distance-3 error-correcting code of Theorem \ref{theorem_gauss_law}, built on a $d$-dimensional lattice with \(N+dN)\) qubits and that uses the Gauss' law operators $G_l$ as stabilizers. That code has parameters \( [N+dN, dN, 3] \), \(\forall N\ge 3^d\). Using concatenation, it is possible to build a CSS quantum error-correcting code with parameters 
    \begin{equation}
    [[N+dN+O(\log(N+dN)), dN, 3]] \quad \forall N\ge3^d.  
    \end{equation}
    \label{lemma:asymptotic_optimality}
\end{lemma}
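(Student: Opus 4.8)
The plan is to read the classical code of Theorem~\ref{theorem_gauss_law} — write $n:=N+dN$ — as a CSS code whose stabilizer group is generated only by the $N$ $Z$-type Gauss operators $G_{\vec l}$ of Eq.~\eqref{GL_theoretical}. Such a code corrects every weight-one $X$ error, so it has ``$X$-distance'' $3$, but since each $G_{\vec l}$ commutes with every $Z$ error its ``$Z$-distance'' is only $1$; the task is therefore to glue on $O(\log n)$ extra qubits and stabilizers that raise the $Z$-distance to $3$ while keeping the $X$-distance and the number $dN$ of logical qubits. The one structural input I would establish first is that every nontrivial product of Gauss operators has weight at least $2d+1\ge 3$: a single site enters together with its $2d$ incident links, and any nonempty set $T$ of sites contributes $|T|$ sites plus the nonempty edge boundary of $T$ in the lattice graph. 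Letting $M$ denote the $N\times n$ binary matrix of supports of the $G_{\vec l}$, this shows at once that the $G_{\vec l}$ are independent (so $\operatorname{rank}M=N$) and that $\operatorname{rowspace}(M)$ contains no vector of weight $\le 2$; consequently the quotient map $\phi\colon\mathbb F_2^{\,n}\to\mathbb F_2^{\,n}/\operatorname{rowspace}(M)\cong\mathbb F_2^{\,dN}$, which records the logical-$Z$ content of $Z_w$, sends the unit vectors $e_1,\dots,e_n$ to $n$ pairwise distinct, nonzero vectors.

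The core step is to adjoin $r=O(\log n)$ new $X$-type stabilizers of the form $\prod_{a\in S_j}\bar X_a$, $j=1,\dots,r$, each a product of logical-$X$ operators $\bar X_a$ of the Gauss code (which may be taken to be pure $X$-strings, as the code has only $Z$-stabilizers). These commute with every $G_{\vec l}$ — logical operators commute with the stabilizer group — and with one another, being pure $X$. Collect the indicator vectors of the $S_j\subseteq\{1,\dots,dN\}$ as the rows of a full-rank matrix $B\in\mathbb F_2^{\,r\times dN}$. Then a weight-one $Z$ error on qubit $i$ acquires syndrome $B\phi(e_i)$, and a nontrivial logical-$Z$ operator of the enlarged code has weight $\min\{\operatorname{wt}(w)\mid\phi(w)\in\ker B\setminus\{0\}\}$, which by the weight-$\ge 3$ fact is below $3$ precisely when $\ker B$ meets the set $\{\phi(e_i)\}\cup\{\phi(e_i)+\phi(e_{i'})\}$. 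Hence both ``distance $3$ against $Z$ errors'' and ``distinct nonzero syndromes for every weight-one $Z$ error'' reduce to the single demand that the $n$ vectors $B\phi(e_i)$ be distinct and nonzero, i.e.\ that the codimension-$r$ subspace $\ker B$ avoid a set of at most $n^2$ prescribed nonzero vectors. A uniformly random codimension-$r$ subspace contains a fixed nonzero vector with probability at most $2^{-r}$, so a union bound furnishes a full-rank admissible $B$ as soon as $r\ge 2\log_2 n+O(1)$. Since adjoining stabilizers cannot decrease the $X$-distance, and since every stabilizer element here has weight $\ge 3$ (Gauss products, and $\bar X_{S_j}$ built inside the distance-$3$ code $\ker M$), the outcome is a CSS code with parameters $[[\,n,\ dN-r,\ 3\,]]$.

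Finally I would restore the $r$ lost logical qubits. Choose the smallest $t$ for which the quantum Hamming code $[[\,2^t-1,\ 2^t-1-2t,\ 3\,]]$ has at least $2\log_2 n+O(1)$ logical qubits, set $r$ equal to that number (still $O(\log n)$, since passing from $t-1$ to $t$ roughly doubles it), and form the code carried by the disjoint union of the $n$ bulk qubits and the $2^t-1$ ancilla qubits, with stabilizer generators the $N$ Gauss operators, the $r$ new $X$-stabilizers on the bulk, and the $2t$ stabilizers of the Hamming code on the ancillas. For a code built in this way the logical counts add, giving $(dN-r)+r=dN$; the CSS form is inherited; and the distance is the minimum of the two block distances, hence $3$, because a single physical error lies entirely in the bulk — where it is corrected by the $G_{\vec l}$ if it is an $X$ error and by the new $X$-stabilizers if it is a $Z$ error — or entirely in the Hamming block, where it is corrected there. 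The physical-qubit count is $n+2^t-1=N+dN+O(\log(N+dN))$, and the hypothesis $N\ge 3^d$ is exactly the one required by Theorem~\ref{theorem_gauss_law}.

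The hard part is the core step: the added $X$-stabilizers must be chosen so that simultaneously (i) they are independent of the $G_{\vec l}$ and of one another, (ii) every weight-one $Z$ error gets a distinct nonzero syndrome, and (iii) no operator of weight $\le 2$ survives as an undetected logical $Z$. All three collapse to the single combinatorial condition on $B$ above, whose feasibility rests entirely on the weight-$\ge 3$ property of products of Gauss operators; the remaining points — CSS-ness, invariance of the $X$-distance, the $O(\log r)$ stabilizer overhead of the Hamming layer, and the parameter arithmetic — are routine. An explicit $B$ (in place of the probabilistic one) can be produced by derandomizing the union bound, e.g.\ by a greedy row-by-row construction.
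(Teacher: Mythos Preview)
Your argument is correct, but it is a genuinely different construction from the paper's. The paper proceeds by straightforward concatenation: it takes a quantum Hamming code $[[2^r-1,\,2^r-1-2r,\,3]]$ with $2^r-1-2r\ge N+dN$ and uses its logical qubits as the physical qubits of the classical Gauss' law code from Theorem~\ref{theorem_gauss_law}; the resulting code is $[[2^r-1,\,dN,\,3]]$, and choosing $r$ appropriately gives $2^r-1 = N+dN+O(\log(N+dN))$. Your route instead keeps the Gauss' law checks acting on the bare $n=N+dN$ physical qubits, augments them with $r=O(\log n)$ pure $X$-type stabilizers (products of logical $\bar X_a$) chosen via a probabilistic argument to raise the $Z$-distance to $3$, and then restores the $r$ sacrificed logical qubits by adjoining a disjoint quantum Hamming block of size $O(\log n)$. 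The paper's approach is shorter and needs only the existence of the Hamming family, whereas yours requires an additional structural input (that nontrivial products of Gauss operators have weight $\ge 3$, which is \emph{not} the same as the $X$-distance statement of Theorem~\ref{theorem_gauss_law}) and the existence argument for $B$. In exchange, your construction keeps the physical Gauss' law operators as literal stabilizers on the original lattice qubits rather than dressing them as logical operators of an inner code, which is closer in spirit to the remark the paper makes after its proof about using a classical Hamming code for $Z$ errors only. One small gap in your weight-$\ge 3$ sketch: the claim that a nonempty $T$ has nonempty edge boundary fails when $T$ is the full lattice, but there the weight is $|T|=N\ge 3^d\ge 3$, so the conclusion survives.
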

The proof of Lemma \ref{lemma:asymptotic_optimality} is given in Appendix \ref{Other_phase_flip}. We now focus on a class of error-correcting codes that are not as optimal in terms of memory, but have improved locality for the stabilizer measurements. Let us define the quantum low-density parity check (QLDPC) codes as codes in which the number of qubits involved in each check and the number of checks acting on each qubit are both upper bounded by a constant \cite{PRXQuantum.2.040101}.  We can then use this definition to see that the Gauss' law error-correcting code is then a CSS QLDPC code.  We provide a formal statement of this observation while providing the parameters for the code below.

\begin{lemma}
    Consider the classical distance-3 error-correcting code of Theorem \ref{theorem_gauss_law}, built on a $d$-dimensional lattice with \(N+dN\) qubits, and that uses the Gauss' law operators as stabilizers. This code has parameters \( [N+dN, Nd, 3] \), for \(N>3^d\). Using concatenation, it is possible to build a CSS QLDPC code with parameters 
    \begin{equation}
        [[3(N+dN), Nd, 3]] \quad \forall N>3^d
    \end{equation}
    \label{lemma:QLDPC}
\end{lemma}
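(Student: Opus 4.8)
\emph{Proof strategy.} The plan is to obtain the asserted code by concatenating the classical Gauss' law code of Theorem~\ref{theorem_gauss_law} with a $3$-qubit repetition code applied to every physical qubit, taken in the conjugate basis so that its parity checks are $X$-type and therefore catch exactly the phase-flip ($Z$) errors that the Gauss' law stabilizers are blind to. Concretely, I would replace each of the $N+dN$ qubits (the sites $S_{\vec l}$ and the links $L_{\vec l,\hat k}$) by a block of three copies $a^{(1)},a^{(2)},a^{(3)}$, add for every original qubit $a$ the two weight-$2$ checks $X_{a^{(1)}}X_{a^{(2)}}$ and $X_{a^{(2)}}X_{a^{(3)}}$, and ``thicken'' each Gauss' law operator through the substitution $Z_a\to Z_{a^{(1)}}Z_{a^{(2)}}Z_{a^{(3)}}$, producing a weight-$3(2d+1)$ all-$Z$ operator $\widetilde G_{\vec l}$. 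The first thing to check is that this is a legitimate CSS stabilizer code: the $\widetilde G_{\vec l}$ commute with the repetition checks because on each shared original block the supports overlap in exactly two copies, so the two anticommutations cancel, while $Z$-type checks commute among themselves and likewise the $X$-type ones. The interpretation ``integrating out the fermion, then encoding each remaining qubit in a $3$-qubit phase-flip repetition code'' makes the concatenation explicit.

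Next I would pin down the parameters. There are $3(N+dN)$ physical qubits by construction. The $2(N+dN)$ repetition checks are manifestly independent (two per block, on disjoint two-qubit supports), and the thickening map $e_a\mapsto e_{a^{(1)}}+e_{a^{(2)}}+e_{a^{(3)}}$ is an injective linear map, so the $\widetilde G_{\vec l}$ have the same rank $N$ as the $G_{\vec l}$ have in Theorem~\ref{theorem_gauss_law}. Hence the number of logical qubits is $3(N+dN)-2(N+dN)-N=dN$, as claimed.

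The substantive step, and the one I expect to be the main obstacle, is showing the distance is exactly $3$. For the upper bound it suffices to exhibit one weight-$3$ logical operator: $Z_{a^{(1)}}Z_{a^{(2)}}Z_{a^{(3)}}$ on any single block commutes with all $\widetilde G_{\vec l}$ and with every repetition check, yet it is not in the $Z$-stabilizer group, since being so would force $Z_a$ to be a stabilizer of the classical Gauss' law code, which has no weight-one stabilizer (each generator has weight $2d+1$ and the single site component can never be cancelled). For the lower bound I would rule out nontrivial logicals of weight $\le 2$ by a syndrome argument. A single-qubit error with an $X$-component anticommutes with every $\widetilde G_{\vec l}$ whose support contains its block (there is at least one), and a single-qubit $Z$ anticommutes with a repetition check of its block, so weight-$1$ errors are detected. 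For weight-$2$ errors supported inside one block the all-$X$ ones are products of that block's two repetition checks, hence stabilizers, and each of the remaining finitely many combinations still carries a $Z$- or an $X$-factor that triggers a repetition check or a $\widetilde G_{\vec l}$. For a weight-$2$ error spread over two distinct blocks $a\ne b$: if either factor has an $X$-component the error anticommutes with some $\widetilde G_{\vec l}$ unless $a$ and $b$ lie in exactly the same set of Gauss' law operators, i.e.\ unless the $a$ and $b$ columns of the classical parity-check matrix coincide, which is precisely forbidden by the distance-$3$ property of Theorem~\ref{theorem_gauss_law} (pairwise-distinct, nonzero columns; for $N>3^d$ no two links share both endpoints); and if both factors are pure $Z$ they excite repetition checks on the two disjoint blocks. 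In every case the error is detected, so the code distance equals $3$.

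Finally, the QLDPC property is read directly off the construction: for fixed spatial dimension $d$ the thickened checks $\widetilde G_{\vec l}$ have weight $3(2d+1)$ and the repetition checks have weight $2$, while every qubit lies in at most two repetition checks and at most two thickened Gauss' law operators, hence in at most four checks, all of these bounds being constants independent of $N$. Combined with the CSS structure and the parameter count above, this gives the claimed $[[3(N+dN),Nd,3]]$ CSS QLDPC code for all $N>3^d$.
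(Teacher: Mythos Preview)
Your construction is correct and coincides with the first of the two concatenation orders the paper describes: replace every site/link qubit by a three-qubit phase-flip repetition block, add the weight-$2$ $X$-checks, and promote each Gauss' law to the weight-$3(2d+1)$ operator $\widetilde G_{\vec l}$. Your parameter count and the QLDPC bound match the paper's, and your case analysis for the distance is in fact more explicit than what the paper writes; the paper simply asserts that concatenating a phase-flip-correcting code with the bit-flip Gauss' law code yields a distance-$3$ quantum code, without walking through the weight-$1$ and weight-$2$ syndrome argument you give.

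The only substantive difference is that the paper also presents the reversed concatenation order---three parallel copies of the Gauss' law code with the repetition-code $X$-checks built from the logical $\overline{X}_l=X_{S_l}X_{L_l}X_{S_{l+1}}$---and compares the two: your order gives $Z$-checks of weight $3(2d+1)$ and $X$-checks of weight $2$, whereas the reversed order gives $Z$-checks of weight $2d+1$ and $X$-checks of weight $6$, a trade-off that can be advantageous in higher dimensions or under asymmetric noise. This second construction is not needed to establish the lemma, so your omission of it is harmless for the proof itself.
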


\begin{proof}
From Theorem~\ref{theorem_gauss_law}, the error-correcting code built from the Gauss' law operators can detect and correct single-qubit bit-flip error ($X$ errors). To transform it into a distance-3 quantum error-correcting code, it has to be concatenated with another error-correcting code able to correct every single-qubit phase-flip error ($Z$ errors). The simplest and smallest phase-flip code is the $3$ qubit repetition code, which takes $3$ physical qubits and encodes them into one logical qubit. The stabilizers for this code are $X_1X_2$ and $X_2X_3$ and we have two possibilities to concatenate it to the Gauss' law code. One possibility is to follow Ref.~\cite{Rajput2023} and transform every site and link of Fig.~\ref{fig:1Dsystem} into logical qubits so that every link and every site will correspond to $3$ physical qubits. Then, those logical qubits already protected against phase-flip errors will be used as physical qubits for the bit-flip Gauss' law code. To do that, we have to substitute the $Z$ operations in the Gauss' law, with the $\bar Z$ logical operation on the logical qubit of the phase-flip code. Since the logical operations for the repetition code are $\overline{X} = X_i, \overline{Z} = Z_1Z_2Z_3$, the Gauss' operator $G_l$ becomes a weight $9$ operator. A representation of the qubits layout and of this code are shown in Fig.~\ref{fig:representation_concatenation} and Fig.~\ref{fig:representation_1d_1concatenation} respectively.
\begin{figure}[t]
    \centering
    \includegraphics[scale=0.4]{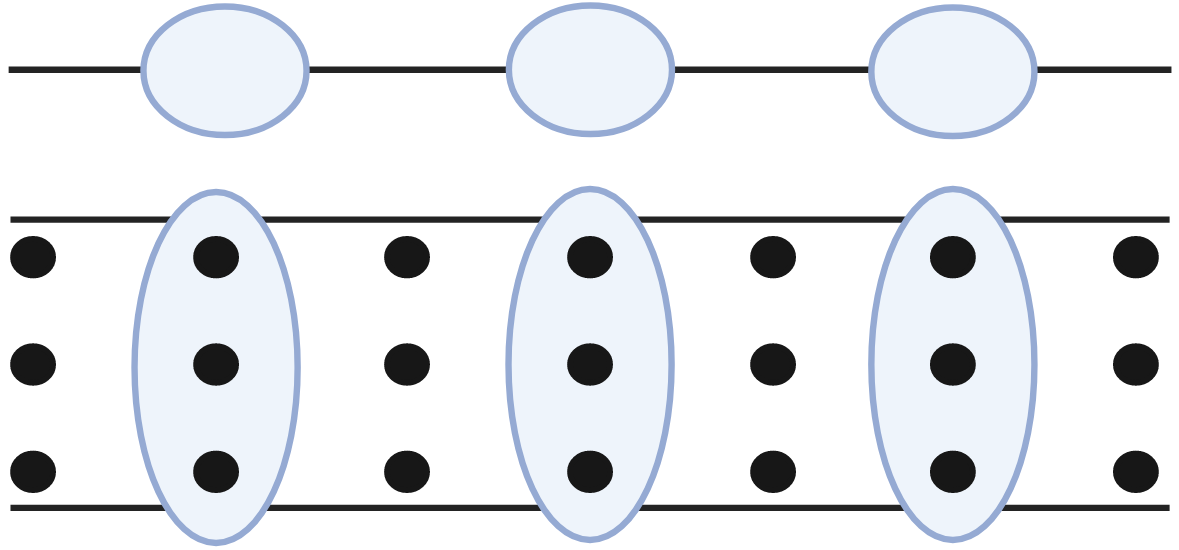}
    \caption{Graphical representation of the additional qubits needed for the concatenation. This figure shows how we can associate every site and link to $3$ qubits now, instead of being one qubit each as before.}
    \label{fig:representation_concatenation}
\end{figure}
\begin{figure}[b]
    \centering
    \includegraphics[scale=0.4]{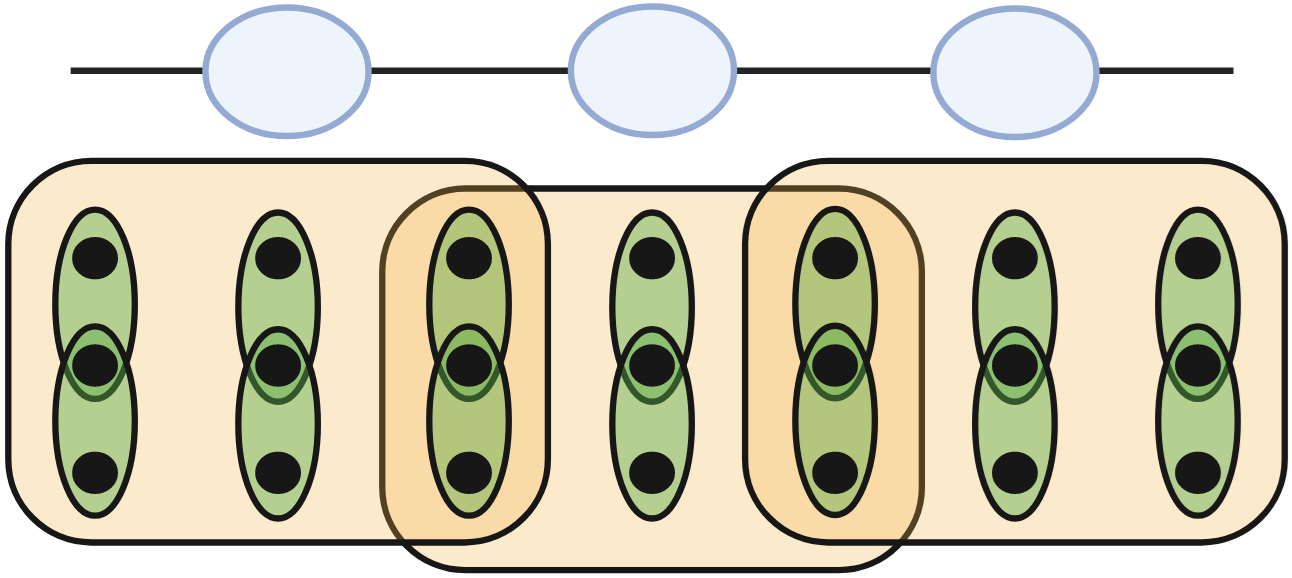}
    \caption{Graphical representation of the quantum error-correcting code. The green circle represents $X$ stabilizers, which are all of weight $2$. The orange area instead represents the weight $9$ Gauss' law stabilizer. On top, the 1-dimensional system is represented to show explicitly that every link and site is in reality made of the $3$ physical qubits below it.}
    \label{fig:representation_1d_1concatenation}
\end{figure}

The second possibility is the other order of concatenation, by building the Gauss' law code before and then repeating it to build the repetition code on top. In this case, we will have $3$ copies of the Gauss' law codes, with logical operations $\overline{Z}_l = Z_{L_l}$ and $\overline{X}_l = X_{S_l}X_{L_{l}}X_{S_{l+1}}$. Then we have to take the stabilizers of the repetition code and extend them from $X_1X_2$ to $\overline{X}_1\overline{X}_2$ and so on, as we did for the Gauss' law in the previous concatenation. A graphical representation of this code is shown in Fig.~\ref{fig:representation_1d_2concatenation}.
\begin{figure}[h]
    \centering
    \includegraphics[scale=0.4]{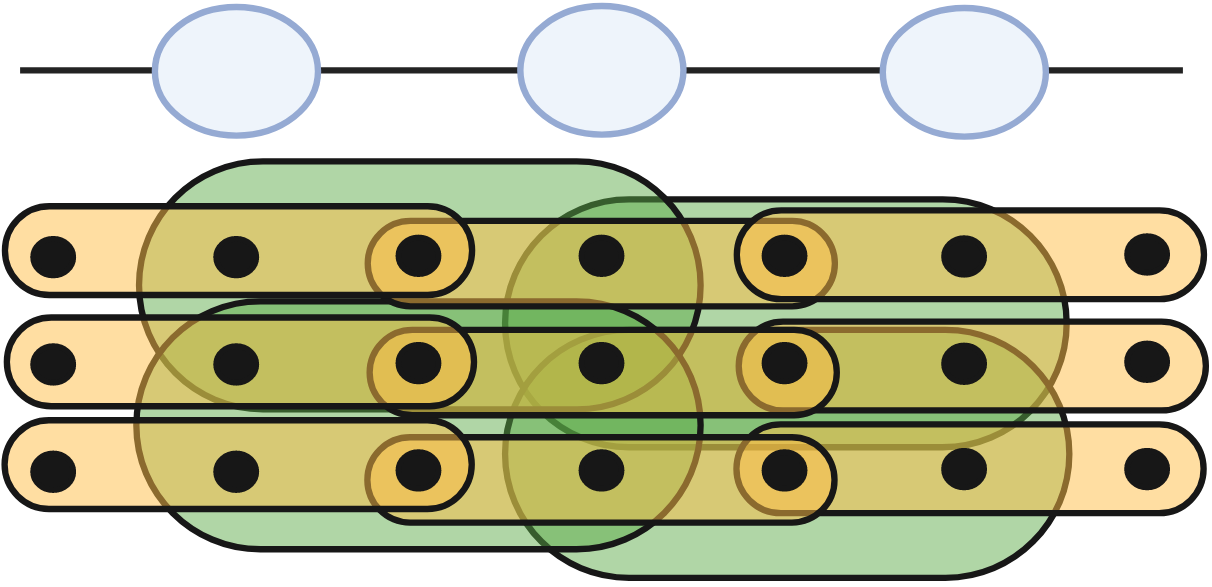}
    \caption{Graphical representation of the quantum error-correcting code with the second possible concatenation. The green areas represent the weight $6$ $X$ stabilizers of the phase-flip repetition code. The orange areas instead represent the weight $3$ Gauss' law stabilizers. On top, the 1-dimensional system is represented to show explicitly that every link and site is in reality made of the $3$ physical qubits below it.}
    \label{fig:representation_1d_2concatenation}
\end{figure}
Both concatenations have the same number of qubits ($3$ for every link and site in the system) and the same logical operations which are $\overline{X}_l = X^i_{S_l} X^j_{L_l} X^k_{S_{l+1}}$ with $i,j,k \in \{1,2,3\}$ label the $3$ qubits of a link/site, and $\overline{Z}_l = Z^1_{L_l} Z^2_{L_l} Z^3_{L_l}$. The major difference between the two concatenations is the maximum weight of the stabilizers ($9$ and $6$) and the fact that, with the first concatenation, we were able to correct $1$ phase-error for every link and every site independently, while the second concatenation requires only $1$ phase-flip error in a larger patch of the system. However, while the first concatenation has a very large patch in which only one bit-flip error can occur, the second concatenation has more symmetry between the errors. So one could decide which concatenation to use according to the error model of the hardware being used. Both concatenations give a distance $3$ quantum error-correcting code, and in both cases the maximum weight (and the locality) of stabilizers is bounded.

In the $2$-dimensional case nothing changes. The Gauss' law acts on a site and $4$ links instead of $2$, but the code construction remains the same because we can still use the $3$ qubits repetition code and concatenate it to the Gauss' law code. Even if the Gauss' law operator changes slightly, we can still encode every link and site in the repetition code and use those logical qubits for the Gauss code, or we can make $3$ copies of the Gauss code and use on top of that the repetition code. This time, the maximum weight of operators for the two concatenations is respectively $15$ and $6$. The $2$ dimensional system is shown in Fig.~\ref{fig:representation_2d}.

\begin{figure}[h]
    \centering
    \includegraphics[scale=0.6]{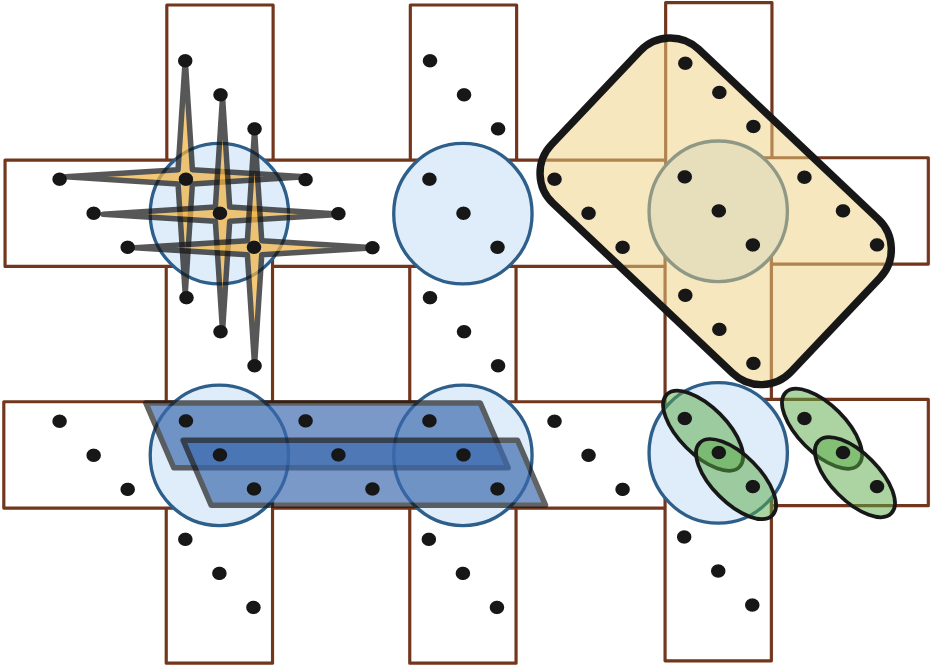}
    \caption{Graphical representation of the stabilizers and logical operations on a $2$-dimensional lattice. The first concatenation leads to the orange rectangles, which is the Gauss' law (weight $15$ operator), and the green ellipses that are the phase-flip stabilizers. The second concatenation instead leads to $3$ Gauss' law per site, depicted as the yellow stars, and the blue parallelograms that are the phase-flip weight $6$ stabilizers.}
    \label{fig:representation_2d}
\end{figure}

For a generic value of $d$, we already discussed how to generalize the classical Gauss' law code with Theorem \ref{theorem_gauss_law}. To generalize the full quantum error-correcting code we have to describe how the $3$-qubit phase-flip code behaves. This depends on the choosen concatenation: if we build the phase-flip as first code, we encode every link and site in a logical qubit already protected against phase-flip errors, and we only need to build on top of that the Gauss' law code as described in Theorem \ref{theorem_gauss_law}. If instead we build the Gauss' law code first, we need, as before, $3$ copies of it, and as one can see from the $2$-dimensional case we will have $2$ phase-flip stabilizers (that are weight $6$ operators) on every link. For completeness, let us report the maximum weight operators for the different concatenations: for the first concatenation, the operator with higher weight will always be the Gauss' law with weight $3(2d+1)$ (corresponding to $3$ qubit for the site and $3$ qubits for every link which are $2$ for each dimension). For the second concatenation, the weight of the Gauss' law operator will be $2d+1$ (with only one qubit for the site and every link), and the maximum weight will be the max between $2d+1$ and $6$ which is the weight of the phase-flip stabilizers.
\end{proof}

In the $2$-dimensional case we need to measure $5$ Gauss' law operators to fully correct a site and its $4$ links. This means that, for our code to work, we have to assume that only $1$ error happens at a time in those $5$ sites and the corresponding $16$ links. This is a big assumption because we would need a quite low error rate, but we can relax the assumption. Once we measure a Gauss' law and it gives $-1$, to know whether the error is on a specific link we need to measure a neighboring Gauss' law extending the measured system to $1$ site and other $3$ links. To avoid it, we can add $3$ qubits, corrected with a phase-flip repetition code, and add the $ZZ$ stabilizer that measures the parity between this additional logical qubit and the already existing link. This procedure is shown in Fig.~\ref{fig:link_doubling}, underlying the qubits that an additional Gauss' law would bring into the considered patch, compared to the single additional link. Of course, this would require more qubits for the error-correcting procedure, but if the assumption is that we can have only $1$ error at a time for every $K$ qubits, we can adjust $K$ according to the error rate in this way. If we add no qubits, we ask for only $1$ error in 63 qubits, corresponding to $5$ sites and their $16$ links (where every site and link corresponds to $3$ qubits), while if we double every link, we are asking for only $1$ error in $4$ doubled links and $1$ site for a total $27$ physical qubits.

\begin{figure}[t]
    \centering
    \includegraphics[scale=0.4]{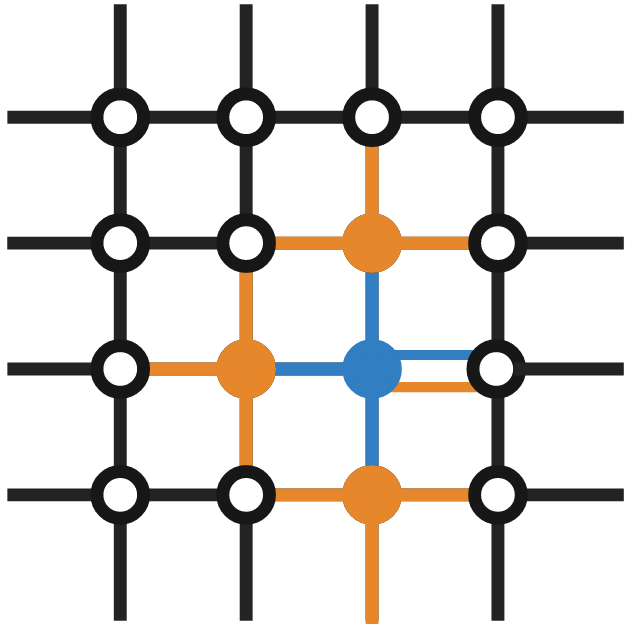}
    \caption{Graphical representation of the patch of the system we need to measure to correct $1$ site and its $4$ links. We start by measuring the Gauss' law on that site, and then we have to check if the error is on the site or a link. To do that, for every link we can either measure another Gauss' law, adding a site and $3$ links ($12$ qubits in total), or add $3$ qubits and measure the parity between the existing link and the additional one.}
    \label{fig:link_doubling}
\end{figure}

Fault-tolerant measurement of error syndromes can be implemented using the ``logical to physical CNOT" explained in \cite{Rajput2023}, and one flag qubit.  With this observation in place, we see that we can perform a fault-tolerant implementation of the syndrome measurements.  Further, the fact that the code is a CSS code will be invaluable later when we discuss how to simulate dynamics within these codes on a quantum computer.

\section{Hamiltonian in terms of logical operations}
\label{section:4}
Above we showed how we can construct quantum error-correcting codes using the Gauss' law checks.  However, in order to simulate lattice Gauge theories within the quantum error-correcting code we need to provide a representation of the Hamiltonian expressed in terms of the corresponding logical operations. In this section, we will rewrite the Hamiltonian in terms of the logical operations of the Gauss' law error-correcting code described in Theorem~\ref{theorem_gauss_law}. Then, we will be able to define new creation and annihilation operators and write the Hamiltonian in terms of these new degrees of freedom. This will be equivalent to integrating fermions on sites, and we will see that the new degrees of freedom will be hardcore bosons.

So, let us start by writing the Hamiltonian in terms of the logical operations, reducing the degrees of freedom.

\begin{theorem}
    Let \(H\) be the Hamiltonian of Eq.~\eqref{Full_Hamiltonian} with the gauge group \(\mathbb{Z}_2\). It is always possible to write $H$ in terms of the logical operations of the error-correcting code of Theorem~\ref{theorem_gauss_law}. By doing so, the degrees of freedom in the Hamiltonian are reduced from $N+dN$ to $dN$.
    \label{theorem:H_logical_operations}
\end{theorem}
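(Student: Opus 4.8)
The plan is to exhibit, for each of the four Hamiltonian terms in Eq.~\eqref{Full_Hamiltonian}, an explicit rewriting in which every site qubit $Z_{S_{\vec l}}$ (the fermionic occupation variable) is eliminated using the Gauss' law stabilizer constraint, leaving operators acting only on the $dN$ link qubits. Recall that on the physical subspace $G_{\vec l}=\mathbb{1}$, so from Eq.~\eqref{pauliZ_representation_PQG} we may substitute
\begin{equation}
    Z_{S_{\vec l}} = (-1)^{|l|}\prod_{\hat k\in\mathcal D} Z_{L_{\vec l,\hat k}} Z_{L_{\vec l-\hat k,\hat k}} \;=: (-1)^{|l|}\,\Pi_{\vec l},
    \label{eq:siteelim}
\end{equation}
i.e.\ the site parity is the product of the six (in $2D$, $2d+1$ in general) incident link parities. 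Since $\psi_{\vec l}^\dagger\psi_{\vec l}=(\mathbb{1}-Z_{S_{\vec l}})/2$ in the qubit encoding of the fermion, the mass term $H_M=m\sum_{\vec l}\sigma_{\vec l}(\mathbb{1}-Z_{S_{\vec l}})/2$ becomes a sum of link-only terms after substituting Eq.~\eqref{eq:siteelim}, and likewise the electric term $H_E$ and the plaquette term $H_P$ already act only on links and need no modification. The content of the theorem is thus that this substitution is consistent — it does not matter which of the two stabilizers sharing a link one ``uses'' — and that the logical operators identified for the code (in $1D$, $\overline Z_l = Z_{L_l}$ and $\overline X_l = X_{S_l}X_{L_l}X_{S_{l+1}}$, generalizing to higher $d$) form the right algebra to express $H$.

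The main work is the hopping term $H_{hop}$, which contains $\psi_{\vec l}^\dagger Q_{\vec l,\hat k}\psi_{\vec l+\hat k}$. First I would Jordan–Wigner (or Bravyi–Kitaev) encode the fermions, so that $\psi_{\vec l}^\dagger\psi_{\vec l+\hat k}$ becomes a product of a string of $Z$'s along some fixed ordering with a pair of raising/lowering operators at sites $S_{\vec l}$ and $S_{\vec l+\hat k}$. The key observation is that a single link hop $\psi^\dagger_{\vec l}\psi_{\vec l+\hat k}+\mathrm{h.c.}$ maps, up to the $Z$-string, to $X_{S_{\vec l}}X_{S_{\vec l+\hat k}}$ plus a $Y_{S_{\vec l}}Y_{S_{\vec l+\hat k}}$ piece, and on a link where the two adjacent sites change occupation together, combined with the gauge operator $Q_{\vec l,\hat k}=X_{L_{\vec l,\hat k}}$, one gets precisely the logical operator $\overline X$ for that link: the combination $X_{S_{\vec l}} X_{L_{\vec l,\hat k}} X_{S_{\vec l+\hat k}}$ commutes with every $G_{\vec m}$ and hence is a valid logical operation. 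Then I would conjugate by the Gauss' law stabilizers to move all $X_{S}$ factors onto links: since $G_{\vec l}$ anticommutes with $X_{S_{\vec l}}$, multiplying the hopping term by appropriate stabilizers $G_{\vec m}$ (which act as $\mathbb{1}$ on the physical subspace) converts each $X_{S_{\vec l}}$ into a product of link $X$'s around the site, and every surviving $Z_{S_{\vec l}}$ from the Jordan–Wigner string is replaced via Eq.~\eqref{eq:siteelim}. Spatial locality is preserved because each $G_{\vec l}$ is local. Finally I would count: we started with $N+dN$ qubits, imposed $N$ independent stabilizer constraints (Theorem~\ref{theorem_gauss_law}), each eliminating one site qubit, leaving $dN$ link degrees of freedom.

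The step I expect to be the main obstacle is handling the fermionic Jordan–Wigner string cleanly: the raw hopping operator is nonlocal, and one must check that after absorbing the right stabilizers the residual operator on the links is genuinely local and independent of the (arbitrary) choice of fermion ordering — equivalently, that the sign factors $\sigma_{\vec l,\hat k}$ chosen in the definition of $H_{hop}$ are exactly what is needed to make the Jordan–Wigner string collapse onto the link variables (this is presumably why those particular staggered signs were built into the model in Section~\ref{section:2}). A secondary subtlety is that in $d\ge 2$ a naive Jordan–Wigner string is not local, so one should either argue within a fixed dimension using a snake ordering, or invoke that the combination ``hopping $\times$ appropriate product of Gauss operators'' is manifestly local regardless of ordering because it equals a fixed product of neighboring link operators; I would take the latter route to keep the argument dimension-independent, matching the paper's stated goal.
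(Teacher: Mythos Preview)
Your overall strategy is the same as the paper's: Jordan--Wigner the fermions, recognize the hopping combination $X_{S_{\vec l}}X_{L_{\vec l,\hat k}}X_{S_{\vec l+\hat k}}$ as the logical $\overline X_{\vec l,\hat k}$, and replace every site $Z_{S_{\vec l}}$ (from the mass term and from the JW string) by the link product via Gauss' law. The degree-of-freedom count is also the paper's.

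There is, however, one concrete error in your write-up. You say you would ``multiply the hopping term by appropriate stabilizers $G_{\vec m}$'' so as to ``convert each $X_{S_{\vec l}}$ into a product of link $X$'s around the site.'' That cannot work: every $G_{\vec m}$ is a product of $Z$ operators only, so multiplying by stabilizers can add or remove $Z$ factors but can never turn a site $X$ into link $X$'s. Fortunately this step is simply unnecessary --- you already made the correct observation two sentences earlier: the three-body string $X_{S_{\vec l}}X_{L_{\vec l,\hat k}}X_{S_{\vec l+\hat k}}$ \emph{is} the logical $\overline X$, so the site $X$'s are not eliminated at all but absorbed into the definition of the logical operator. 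The reduction to $dN$ degrees of freedom is a statement about the logical Hilbert space, not about the support of the physical operators. Relatedly, the plaquette term is not quite ``no modification needed'': $X_{L_{p_1}}X_{L_{p_2}}X_{L_{p_3}}X_{L_{p_4}}$ must be recognized as $\overline X_{p_1}\overline X_{p_2}\overline X_{p_3}\overline X_{p_4}$, with the eight site-$X$ factors cancelling pairwise at the plaquette corners.

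On your anticipated obstacle: in the paper the Jordan--Wigner string does \emph{not} collapse for $d\ge 2$. The $2D$ vertical hopping retains the nonlocal factor $P_{i,j}=\prod Z_{S}$, which after Gauss-law substitution becomes a nonlocal product of link $Z$'s. The authors simply accept this, distinguishing the (preserved) spatial locality of the interaction from the (encoding-dependent) qubit locality, and remark that a different fermion encoding could cure it. So your hope that the staggered signs $\sigma_{\vec l,\hat k}$ conspire to make the string local in arbitrary $d$ is not what happens; the dimension-independent claim is only that $H$ can be written in the logical operators, not that the result is geometrically local on the links.
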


Here we will discuss only the results for the $1$ dimensional case, while the proof of Theorem~\ref{theorem:H_logical_operations}, as well as the complete calculation for the $1$ and $2$ dimensional cases, are reported in Appendix~\ref{bosons_2D}.

Recall the $1$ dimensional Hamiltonian of Eq.~\eqref{Full_Hamiltonian}
\begin{equation}
    H = m \sum_l (-1)^l \psi_l^{\dagger}\psi_l + \epsilon \sum_{l} (\psi_l^{\dagger} Q_{l} \psi_{l+1} + \psi_{l+1}^{\dagger} Q_{l}^\dag \psi_{l}) +2\lambda_E \sum_{l} P_{l}
\end{equation}
With a $\mathbb{Z}_2$ symmetry we already said in the first chapter that $P_{L_l} = Z_{L_l}$ and $Q_{L_l} = X_{L_l}$. As for the fermionic operators, we can use the Jordan-Wigner representation, so that we can write the Hamiltonian in terms of the operations $Z$ and $X$ on links and sites:
\begin{equation}
\begin{split}
    H &= \frac{m}{2}\sum_{l} (-1)^l(1 -(-1)^l Z_{S_l}) +\frac{\epsilon}{2} \sum_l \left( 1 + Z_{S_l}Z_{S_{l+1}} \right) X_{S_l}X_{S_{l+1}}X_{L_{l}} +2\lambda_E \sum_l Z_{L_l}
\end{split}
\end{equation}
Remember that, due to the staggered fermions representation, the operator $\rvert0\rangle\langle1\lvert$ is the annihilation operator for even sites while it is the creation operator for odd sites. Now, recall the logical operations of the error-correcting code of the last chapter and the Gauss' law operators:
\begin{equation}
    \overline{X}_l = X_{S_{l}}X_{L_{l}}X_{S_{l+1}} \quad \overline{Z}_l = Z_{L_l} \quad G_l = (-1)^l Z_{L_{l-1}}Z_{S_l}Z_{l}
    \label{eq:logic}
\end{equation}
Here we are neglecting the presence of the phase-flip code, but the only difference with it is that every $Z$ operation would be a weight $3$ logical $\bar Z$. From here, using the relations of Eq.~\eqref{eq:logic}, we can express the Hamiltonian in terms of logical operations only:
\begin{equation}
\begin{split}
    H &= \frac{m}{2}\sum_{l} (-1)^l(1 - G_l\overline{Z}_{l-1} \overline{Z}_{l}) +\frac{\epsilon}{2} \sum_l \left( 1 -  G_lG_{l+1}\overline{Z}_{l-1} \overline{Z}_{l+1} \right) \overline{X}_{l} +2\lambda_E \sum_l \overline{Z}_l\\
    &= \frac{m}{2}\sum_{l} (-1)^l(1 - \overline{Z}_{l-1} \overline{Z}_{l}) +\frac{\epsilon}{2} \sum_l \left( 1 -  \overline{Z}_{l-1} \overline{Z}_{l+1} \right) \overline{X}_{l} +2\lambda_E \sum_l \overline{Z}_l\\
\end{split}
    \label{eq:1d-hamiltonian}
\end{equation}
Note that in the second line the Gauss' laws have been simplified since they correspond to the logical identity. This form of the Hamiltonian is the one that Theorem~\ref{theorem:H_logical_operations} states to exist and the degrees of freedom have been reduced from $2N$ to $N$ as expected. Apart from the explicit staggered mass term, this manifestly gauge invariant form of the Hamiltonian for a one-dimensional $\mathbb{Z}_2$ theory coupled to spinless fermions has been proposed already in Ref.~\cite{PhysRevLett.124.120503} where it was derived a Majorana representation for the fermions. Our construction is however simpler and allows for straightforward generalizations to higher spatial dimensions.

Now that we have the Hamiltonian in term of the logical degrees of freedom, we can introduce bosonic degrees of freedom and write the entire Hamiltonian in terms of those.

\begin{corollary}\label{cor:bosonic}
    Let \(H\) be the Hamiltonian of Eq.~\ref{Full_Hamiltonian} with the gauge group \(\mathbb{Z}_2\). It is possible to write $H$ in terms of only bosonic degrees of freedom.
\end{corollary}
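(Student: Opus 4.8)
The plan is to read Corollary~\ref{cor:bosonic} as an essentially immediate consequence of Theorem~\ref{theorem:H_logical_operations}: once $H$ has been re-expressed in terms of the logical operations of the Gauss' law code, the only degrees of freedom that remain are the $dN$ logical qubits sitting on the links, and a single two-level system is exactly reproduced by a single hardcore bosonic mode. So the proof amounts to fixing the qubit--hardcore-boson dictionary and substituting it term by term.

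Concretely, I would first invoke the logical form of the Hamiltonian from Theorem~\ref{theorem:H_logical_operations} --- Eq.~\eqref{eq:1d-hamiltonian} for $d=1$ and its general-$d$ analogue worked out in Appendix~\ref{bosons_2D} --- and note that every summand is a product of the operators $\overline{X}_l$ and $\overline{Z}_l$ acting on distinct link qubits (the Gauss' law factors having collapsed to the logical identity and, for $d\ge 2$, the site operators in the hopping and plaquette terms cancelling pairwise around closed loops). On each link $l$ I would then introduce hardcore bosonic operators $b_l = \big(\ket{0}\bra{1}\big)_l$, with $b_l^2 = (b_l^\dagger)^2 = 0$ and $\{b_l,b_l^\dagger\}=1$ on the same link, $[b_l,b_{l'}] = [b_l,b_{l'}^\dagger] = 0$ for $l\ne l'$, and number operator $n_l = b_l^\dagger b_l$. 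The dictionary is then simply
\begin{equation}
\overline{Z}_l = 1 - 2n_l,\qquad \overline{X}_l = b_l + b_l^\dagger,
\end{equation}
and since the two-level Pauli algebra on each link is matched exactly by the hardcore boson algebra, substituting these identities into the logical Hamiltonian turns it into an operator written purely in terms of the $b_l$ and $b_l^\dagger$. For $d=1$ this yields, for example,
\begin{equation}
H = \frac{m}{2}\sum_l (-1)^l\big(1-(1-2n_{l-1})(1-2n_l)\big) + \frac{\epsilon}{2}\sum_l\big(1-(1-2n_{l-1})(1-2n_{l+1})\big)(b_l+b_l^\dagger) + 2\lambda_E\sum_l(1-2n_l),
\end{equation}
and the same substitution applied to the general-$d$ logical Hamiltonian gives the mass, electric, hopping and plaquette terms as polynomials in the link bosons.

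I do not expect a substantive obstacle in the corollary itself; the only point that needs care is already discharged by Theorem~\ref{theorem:H_logical_operations}, namely checking that after eliminating the site qubits in $d\ge 2$ no residual site operators (equivalently, no leftover Jordan--Wigner strings) survive in the hopping and plaquette terms, so that what remains is a genuine bosonic Hamiltonian rather than one still entangled with fermionic degrees of freedom. It is also worth remarking explicitly that the hardcore constraint is automatic --- each link carries a single qubit, so no projector onto occupation $\le 1$ has to be imposed by hand --- and that the construction is manifestly dimension-independent, which is precisely the advantage over the Majorana-based eliminations of the fermions in the literature.
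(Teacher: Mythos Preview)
Your proposal is correct and follows essentially the same route as the paper's proof: invoke Theorem~\ref{theorem:H_logical_operations} to reduce $H$ to logical link Paulis, introduce hardcore bosons on each link via the standard qubit--boson dictionary, verify the (anti)commutation relations, and substitute. The only difference is a harmless sign convention in which computational basis state is called ``occupied'' (your $\overline{Z}_l = 1-2n_l$ versus the paper's $Z_l = 2N_l-1$); note also that in $d\ge 2$ the Jordan--Wigner strings in the hopping term do not literally cancel around loops but are converted to products of link $\overline{Z}$'s via Gauss' law --- a point you correctly delegate to Theorem~\ref{theorem:H_logical_operations}.
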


\begin{proof}
    Let us start from the Hamiltonian in the form given by Theorem \ref{theorem:H_logical_operations}, such that the Hamiltonian $H$ is only a function of $X_i^k$ and $Z_i^k$ ($i\in [1,N]$ labels sites in the lattice, while $k\in [1,d]$ labels the different directions in the lattice). In general, the Hamiltonian will be written in terms of $2dN$ operators, that correspond to $dN$ degrees of freedom (for a $d$ dimensional lattice with $N$ sites). Now, let us define a set of bosonic degrees of freedom as follows:
    \begin{equation}
    \begin{split}
        \phi^k_i &= \frac{1}{2}\left( 1 - Z_i^k \right) X_i^k \quad\quad\quad
        (\phi^k_i)^{\dag} = \frac{1}{2}\left( 1 + Z_i^k \right) X_i^k \\
    \end{split}
    \end{equation}
    It is easy to prove the commutation relations to see that those operators are the creation and annihilation operators corresponding to hardcore bosons:
    \begin{equation}
    \begin{split}
        &\{ \phi_i^k, \phi_i^k \} = \{ \phi_i^{k~\dagger}, \phi_i^{k~\dagger} \} = 0 
        ~~~~~~~~
        \{ \phi_i^k , \phi_i^{k~\dagger} \} = 1 \\
        & [ \phi_i^k, \phi_j^l] = [\phi_i^{k~\dagger}, \phi_j^{l~\dagger}] = [\phi_i^k, \phi_j^{l~\dagger}] = 0 ~~~~~~~~~ \forall i\ne j, \forall k\ne l \\
    \end{split}
    \end{equation}
    Then, we can invert the relation and isolate operators that appear in the Hamiltonian:
    \begin{equation}
        Z_i^k = 2\phi_i^{k~\dagger}\phi_i^k - 1 \quad X_i^k = \phi_i^{k~\dagger} + \phi_i^k
    \end{equation}
    Now those operators can be substituted in the Hamiltonian $H$ and in this way we find the Hamiltonian written in terms of only hardcore bosonic degrees of freedom.
\end{proof}

As before, we now show the result for the $1$ dimensional case, but the full calculations both for the $1$ dimensional and $2$ dimensional cases are shown in Appendix \ref{bosons_2D}. An advantage of the construction is that this derivation works for every boundary condition (by adapting the corresponding Gauss' laws) and every spatial dimension $d$ of the lattice, which means that we will always be able to write the Hamiltonian in terms of hardcore bosonic degrees of freedom, keeping a local Hamiltonian (without all-to-all interactions). In $1$ dimension we define the hardcore bosonic operators:
\begin{equation}
\begin{split}
    \phi_l &= \frac{1}{2} \left( 1 - Z_l \right) X_l \quad\quad\quad
    \phi^{\dagger}_l = \frac{1}{2} \left( 1 + Z_l \right) X_l \\
    \label{boson_encoding}
\end{split}
\end{equation}
Then, by defining the number operator of bosons as $N_l = \phi_l^{\dagger} \phi_l$, we can write the Hamiltonian in terms of these new degrees of freedom (full calculations in Appendix \ref{bosons_2D}):
\begin{equation}
\begin{split}
    H &= 2m \sum_l (-1)^{l+1} N_{l-1} N_{l} + \epsilon \sum_l (N_{l-1}-N_{l+1})^2 (\phi_{l} + \phi_{l}^{\dagger}) + 2\lambda_E \sum_l (2N_l-1) \;,
\end{split}
\end{equation}
where we used the fact that $N_l =N_l^2$ for hard-core bosons.

As mentioned before, this form of the Hamiltonian underlines that sites have been integrated out since we used Gauss' law to write everything only as a function of the bosonic degrees of freedom. In Appendix \ref{bosons_2D} we show the full calculations for the $2$ dimensional case following the same procedure. In the same appendix, we also show another possible choice of logical operations which is non-local and does not lead to bosonic degrees of freedom but could be interesting for other reasons, since it could eliminate some non-locality given by the Jordan-Wigner representation. It is important to notice that there are two notions of locality here: the first one, of the physical interaction, is the one we claim to preserve. When integrating out matter sites, the new interaction are smeared out from the use of Gauss' law but are still spatially local. The second notion of locality is the locality of the encoded operators in the qubit space used for the embedding. In particular, the encoding of fermionic degrees of freedom into qubit could lead to non-local qubit operators as is the case with the Jordan-Wigner mapping used here. This non-locality is however independent of the particular gauge theory used, and in fact can be present even when there are no gauge bosons. Our procedure for integrating out the matter sites using Gauss' law preserves the qubit locality of the underlying qubit Hamiltonian in a way that is independent of the particular fermionic encoding. This means that, by using a different encoding from Jordan-Wigner, it could be possible to have also a local encoded Hamiltonian. In summary, the locality of the logical operations in our error correcting code is controlled only by the Gauss' law which is local, while the locality of the Hamiltonian of the system depends on the fermionic encoding employed and is preserved by our construction.

There are other ways to transform fermionic degrees of freedom into hardcore bosons, as done e.g. in Refs.\cite{Verstraete_2005,PhysRevB.98.075119}, and starting from a Hamiltonian already written in terms of hardcore bosons could simplify the $2$ dimensional case since it could avoid the non-local Pauli strings of Jordan-Wigner.

\section{Time evolution}
\label{section:5}

In previous sections, we built an error-correcting code exploiting the Gauss' Law. However, having an error-correcting code is not enough, if it cannot be used to perform operations. In this section, we will show how that code can be used to perform fault-tolerant operations on the system. In particular, we will provide a method to perform a universal gate set by state-injection, and methods to do a fault-tolerant time evolution of the Hamiltonian presented in the first section by using the error-correcting code described by Lemma \ref{lemma:QLDPC}. First, the Quantum Signal Processing (QSP) algorithm~\cite{PhysRevLett.118.010501,Low2019hamiltonian} will be used to prove that, given an ancilla register with a logarithmic number of (logical) qubits, all non-Clifford operations can be performed on the ancilla qubits, applying only Clifford operations on the error-correcting code. Then, we will discuss a method that uses product formulas, providing a fault-tolerant implementation and proving that a time evolution of $H$ with product formulas can be performed using at most $3$ logical ancilla qubits (Theorem~\ref{theorem:clifford_only}).

\subsection{Quantum Signal Processing}

Let us consider a time evolution via the QSP algorithm. We will use the linear combination of unitaries (LCU) algorithm~\cite{10.5555/2481569.2481570} to implement the Hamiltonian using an ancilla register. Then, from LCU to QSP one only needs to add another ancilla qubit, but for now, let us consider LCU only. For simplicity, let us look at the $1$-dimensional Hamiltonian, and recall $H$ in terms of the logical operations from Theorem \ref{theorem:H_logical_operations}:
\begin{equation}
    H = -\frac{m}{2}\sum_{l=1}^N (-1)^l Z_{l-1}Z_{l} + \frac{\epsilon}{2}\sum_{l=1}^N X_{l} - \frac{\epsilon}{2}\sum_{l=1}^N Z_{l-1} Z_{l+1}X_{l} + 2\lambda_E \sum_{l=1}^N Z_l
\end{equation}
To use the LCU algorithm, we need the normalized Hamiltonian $\tilde H = H/(N\eta)$ where $N$ is the number of links in the system and $\eta=\frac{m}{2} + \epsilon + 2\lambda_E$:
\begin{equation}
    \widetilde H = \frac{1}{N}\sum_{l=1}^N \left( \frac{m}{2\eta}(-1)^{l+1} Z_{l-1} Z_{l} + \frac{\epsilon}{2\eta}X_{l} + \frac{\epsilon}{2\eta}(-Z_{l-1} Z_{l+1}X_{l}) + \frac{2\lambda_E}{\eta}(-Z_l) \right)
\end{equation}
In general, considering the $d$ dimensional lattice with $N$ sites, we can always write the Hamiltonian as a sum over the $dN$ degrees of freedom (represented in our code by the links of the lattice) and a sum over the $K$ different operators with their coupling constants:
\begin{equation}
    \tilde H = \frac{H}{\eta dN} = \frac{1}{dN} \sum_{l=1}^{dN} \left( \sum_{k=0}^K \frac{\eta_k}{\eta} O_{k,l} \right)
\end{equation}
where $\eta = \sum_k \eta_k$. Note that we included the minus signs inside the operators so that all coupling constants are positive. From here, we can define the prepare and select operators:
\begin{equation}
\begin{split}
    \text{PREP}\ket{0}_a = \left( \frac{1}{\sqrt{2^n}} \sum_{l=0}^{2^n-1} \ket{l}_{a_1}\right) \otimes \left( \sum_{k=0}^K \sqrt{\frac{c_k}{C}} \ket{k}_{a_2} \right)
\end{split}
\end{equation}
\begin{equation}
    \text{SEL} = \sum_{k=0}^{K}  \ket{k}_{a_2}\bra{k}_{a_2} \left( \sum_{l=0}^{dN-1}\ket{l}_{a_1}\bra{l}_{a_1} \otimes O_{k,l} + \sum_{l=dN}^{2^n-1}\ket{l}_{a_1}\bra{l}_{a_1} \otimes \mathbb{I} \right)
    \label{eq:select}
\end{equation}
The $a$ subscript denotes the ancilla register, divided into the $a_1$ and $a_2$ registers, while $n=\lceil\log_2(dN)\rceil$ is the smallest integer such that $2^n \geq dN$. In general, we should use $dN$ instead of $2^n$ in the prepare operator, but having a power of $2$ makes everything much easier. So we use the smallest power of $2$ larger than $dN$, paying the price of implementing $\frac{dN}{2^n}\widetilde H$ instead of just $\widetilde H$. Since $\frac{1}{2}<\frac{N}{2^n}<1$, it means that we will have to at most fator of two in the normalization.

As for the size of the ancilla registers, $a_2=\lceil\log_2(K)\rceil$ is a small register (in the $1$ dimensional case it is only $2$ qubits, while in the $2$ dimensional case it will be $3$ qubits), while $a_1$ has the number of qubits equal to $n$ (which scales logarithmically in the number of sites of the system). The prepare operator is easy to do since it will be made of $n$ Hadamards applied on $a_1$, while on $a_2$ we need arbitrary state preparation, but we do not expect the size of this register to grow with the system size.

The select operator instead is more complicated, and we can estimate its complexity in terms of the number of Toffoli gates it requires. Given a circuit made of NOT, CNOT, and Toffoli (a so-called reversible NCT circuit), one can show that the lower bound to the number of Toffoli gates required to implement the select circuit of Equation~\ref{eq:select} is $2^n - n - 1$ (\cite{lower_bound_2016}, Lemma 4). The upper bound to the number of Toffoli gates instead is given by $1.5 \times 2^n - 4$ (\cite{Childs_2018}, Lemma G.7), using $n-1$ ancilla qubits.

We can now use the LCU oracles defined above to perform time-evolution using QSP by adding an additional qubit to the ancilla register which will be used to perform rotations and a controlled version of the SEL operation. All Toffoli gates and non-Clifford operations are done on the ancilla registers, which will have their own error-correcting code, while on our system will be applied only the controlled-$O_{k,l}$ operators. By definition, the $O_{k,l}$ are Pauli operators and the control will be a single qubit of the logical ancilla register, which means that we need to apply to the lattice Clifford operations only.

To ensure that those Clifford operations are fault-tolerant, let us assume that the ancilla qubit from which we control all the controlled-$O_{k,l}$ gates is in the $7$-qubit Steane code \cite{SteaneCode, Gottesman1997}. This code has logical operations $\overline X = X_1X_2X_3$ and $\overline Z = Z_1Z_2Z_3$, which are the same logical operations of the phase-flip repetition code used on the lattice. Since they are both CSS codes with the same logical operations, it means that the CNOT is transversal between the two codes, as shown in the left panel of Figure \ref{fig:transversal_CNOT_crosscode}. So, if $O_{k,l}$ is a weight $1$ operator for the Gauss' Law code, it is transversal and fault-tolerant by definition on the repetition code. If it has higher weight, it means that we have to repeat several times the circuit on the left panel of Figure \ref{fig:transversal_CNOT_crosscode}, and if an $X$ error happens on the control qubit it will spread on the system. To prevent this, the most trivial solution is to perform an error-correcting cycle on the ancilla register between every controlled logical operation, as shown in the right panel of Figure \ref{fig:transversal_CNOT_crosscode}. An alternative, that could be more efficient, is to use flag qubits to catch specific errors that are problematic \cite{Chamberland_2018, Chao_2020}.

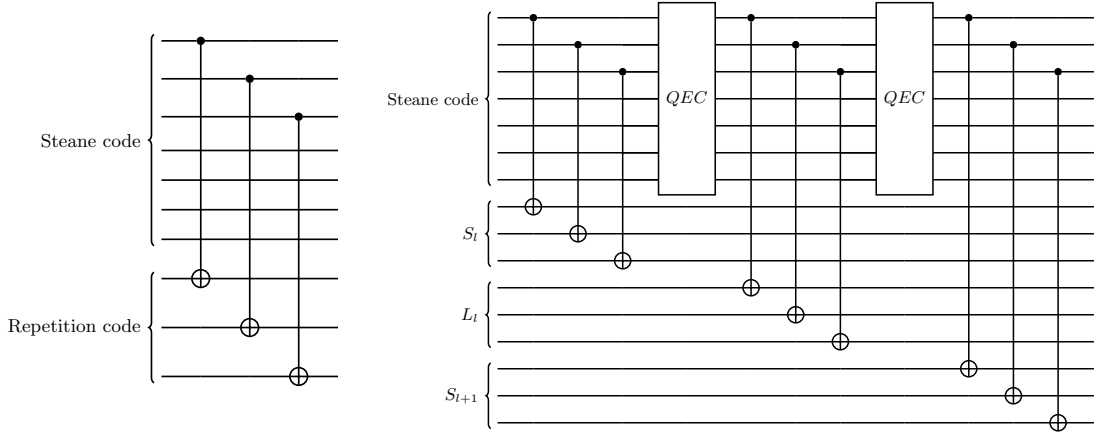
\begin{figure}[t]
\centering
\resizebox{0.3\textwidth}{!}{
\begin{quantikz}
    \lstick[7]{Steane code} & \ctrl{7} & \qw & \qw & \qw & \\
    & \qw & \ctrl{7} & \qw & \qw & \\
    & \qw & \qw & \ctrl{7} & \qw & \\
    & \qw & \qw & \qw & \qw & \\
    & \qw & \qw & \qw & \qw & \\
    & \qw & \qw & \qw & \qw & \\
    & \qw & \qw & \qw & \qw & \\
    \lstick[3]{Repetition code} & \targ{} & \qw & \qw & \qw & \\
    & \qw & \targ{} & \qw & \qw & \\
    & \qw & \qw & \targ{} & \qw & \\
\end{quantikz}}
\resizebox{0.6\textwidth}{!}{
\begin{quantikz}[row sep={0.5cm,between origins}]
    \lstick[7]{Steane code} & \ctrl{7} & \qw & \qw & \gate[7]{QEC} & \ctrl{10} & \qw & \qw & \gate[7]{QEC} & \ctrl{13} & \qw & \qw & \qw & \\
    & \qw & \ctrl{7} & \qw & \qw & \ghost{QEC}\qw & \ctrl{10} & \qw & \qw & \ghost{QEC}\qw & \ctrl{13} & \qw & \qw & \\
    & \qw & \qw & \ctrl{7} & \qw & \ghost{QEC}\qw & \qw & \ctrl{10} & \qw & \ghost{QEC}\qw & \qw & \ctrl{13} & \qw & \\
    & \qw & \qw & \qw & \qw & \ghost{QEC}\qw & \qw & \qw & \qw & \ghost{QEC}\qw & \qw & \qw & \qw & \\
    & \qw & \qw & \qw & \qw & \ghost{QEC}\qw & \qw & \qw & \qw & \ghost{QEC}\qw & \qw & \qw & \qw & \\
    & \qw & \qw & \qw & \qw & \ghost{QEC}\qw & \qw & \qw & \qw & \ghost{QEC}\qw & \qw & \qw & \qw & \\
    & \qw & \qw & \qw & \qw & \ghost{QEC}\qw & \qw & \qw & \qw & \ghost{QEC}\qw & \qw & \qw & \qw & \\
    \lstick[3]{$S_l$} & \targ{} & \qw & \qw & \qw & \qw & \qw & \qw & \qw & \qw & \qw & \qw & \qw & \\
    & \qw & \targ{} & \qw & \qw & \qw & \qw & \qw & \qw & \qw & \qw & \qw & \qw & \\
    & \qw & \qw & \targ{} & \qw & \qw & \qw & \qw & \qw & \qw & \qw & \qw & \qw & \\
    \lstick[3]{$L_l$} & \qw & \qw & \qw & \qw & \targ{} & \qw & \qw & \qw & \qw & \qw & \qw & \qw & \\
    & \qw & \qw & \qw & \qw & \qw & \targ{} & \qw & \qw & \qw & \qw & \qw & \qw & \\
    & \qw & \qw & \qw & \qw & \qw & \qw & \targ{} & \qw & \qw & \qw & \qw & \qw & \\
    \lstick[3]{$S_{l+1}$} & \qw & \qw & \qw & \qw & \qw & \qw & \qw & \qw & \targ{} & \qw & \qw & \qw & \\
    & \qw & \qw & \qw & \qw & \qw & \qw & \qw & \qw & \qw & \targ{} & \qw & \qw & \\
    & \qw & \qw & \qw & \qw & \qw & \qw & \qw & \qw & \qw & \qw & \targ{} & \qw & \\
\end{quantikz}}
\caption{On the left-hand side, the implementation of a logical CNOT between the Steane code and the phase-flip $3$-qubits repetition code, which is transversal. On the right-hand side, the fault-tolerant implementation of a logical $X_l = X_{S_l} X_{L_l} X_{L_{l+ 1}}$.}
\label{fig:transversal_CNOT_crosscode}
\end{figure}

\subsection{Trotterization}

In the last section, we showed that, using a logarithmic number of ancilla qubits, we are able to restrict the application on our error-correcting code of only Clifford operations, and perform a fault-tolerant time evolution of the Hamiltonian. Now, we will extend this argument by proving that not only this is possible, but also that the same can be achieved with a constant number of ancilla qubits. Of course, both methods have their own advantages, and there could be cases where Trotter-based simulations have a better asymptotic scaling than QSP (see e.g.~\cite{Shaw2020quantumalgorithms,Rajput2022hybridizedmethods}).

\begin{theorem}
    Let $H$ be the Hamiltonian of Equation \ref{Full_Hamiltonian}, and consider the lattice encoded in the quantum error-correcting code described in Lemma~\ref{lemma:QLDPC}. When performing the time evolution of $H$ using ancilla qubits, it is possible to perform all non-Clifford operations on the ancilla qubits and apply to the lattice Clifford operations only. This can be done with at most $3$ ancilla qubits encoded in the $7$-qubits code.
    \label{theorem:clifford_only}
\end{theorem}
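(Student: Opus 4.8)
The plan is to start from the manifestly gauge-invariant form of $H$ provided by Theorem~\ref{theorem:H_logical_operations}, write $H=\sum_j h_j P_j$ with each $P_j$ a Pauli string in the logical operators $\bar X_l,\bar Z_l$ of the code of Lemma~\ref{lemma:QLDPC}, and approximate $e^{-iHt}$ by a (possibly higher-order) product formula, a product of elementary factors $e^{-i\theta P_j}$ whose error is controlled by the standard Trotter bounds for lattice Hamiltonians~\cite{childs2021theory}. Everything then reduces to implementing one $e^{-i\theta P}$, $P$ a logical Pauli string, so that the lattice register only ever experiences Clifford gates while at most three extra logical qubits (in the $7$-qubit code) are used.

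The first step is to conjugate $P$ to a \emph{single-qubit} logical Pauli by a lattice Clifford that uses no logical Hadamard. Splitting $P$ into its $\bar Z$-type and $\bar X$-type tensor factors (which sit on disjoint logical qubits), one collapses the $\bar Z$-type factors onto an $\bar X$-type factor with logical $\mathrm{CZ}$ gates, using $\mathrm{CZ}(q,l)\,\bar X_l\,\mathrm{CZ}(q,l)=\bar Z_q\bar X_l$ to cancel a stray $\bar Z_q$, and collapses a pure $\bar Z$-string onto one of its qubits with logical $\mathrm{CNOT}$s; for example in $1$D the hopping term $\bar Z_{l-1}\bar Z_{l+1}\bar X_l$ is sent to $\bar X_l$ by $\mathrm{CZ}(l-1,l)\,\mathrm{CZ}(l+1,l)$, while the mass and electric terms are already short $\bar Z$-strings. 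After this, $e^{-i\theta P}=V^{\dagger}e^{-i\theta\bar\sigma_q}V$ with $V$ a Clifford on the lattice and $\bar\sigma_q\in\{\bar Z_q,\bar X_q\}$.

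The second step is to move the non-Clifford rotation onto an ancilla using the standard phase gadget, with a fixed rotation axis so that no logical Hadamard touches the lattice. If $\bar\sigma_q=\bar Z_q$, prepare an ancilla in logical $\ket{0}$ and apply the transversal cross-code $\mathrm{CNOT}$ (lattice $q\to$ ancilla) — the same fault-tolerant Steane-to-repetition $\mathrm{CNOT}$ used in Figure~\ref{fig:transversal_CNOT_crosscode} — so that $\bar Z_{\mathrm{anc}}$ agrees with $\bar Z_q$ on the joint state, perform $e^{-i\theta\bar Z_{\mathrm{anc}}}$ on the ancilla, and uncompute the $\mathrm{CNOT}$. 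If $\bar\sigma_q=\bar X_q$, prepare the ancilla in logical $\ket{+}$, apply the cross-code $\mathrm{CNOT}$ with the ancilla as \emph{control} so that $\bar X_{\mathrm{anc}}\mapsto\bar X_{\mathrm{anc}}\bar X_q$, and perform $e^{-i\theta\bar X_{\mathrm{anc}}}=\bar H_{\mathrm{anc}}\,e^{-i\theta\bar Z_{\mathrm{anc}}}\,\bar H_{\mathrm{anc}}$, which is legitimate precisely because $\bar H$ is transversal on the Steane code. In both cases the lattice sees only the Clifford $V$ and Clifford (cross-code) $\mathrm{CNOT}$s, and the sole non-Clifford operation $e^{-i\theta\bar Z_{\mathrm{anc}}}$ lives on a Steane-encoded ancilla, where it is done fault-tolerantly by magic-state injection together with, for generic angles, distillation or $T$-count synthesis of the rotation — all available because the Steane code carries a fault-tolerant universal gate set.

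Counting ancillas, one logical qubit acts as the parity/rotation register, one supplies the magic state consumed by $e^{-i\theta\bar Z_{\mathrm{anc}}}$, and one serves as the flag that makes the cross-code $\mathrm{CNOT}$s and the accompanying syndrome extraction fault-tolerant; these three $7$-qubit-encoded ancillas are reused across all product-formula steps, which gives the bound of at most $3$. The construction is dimension-independent: in $d$ dimensions the logical Pauli strings of Theorem~\ref{theorem:H_logical_operations} are longer but still of bounded weight, so the same $\mathrm{CZ}/\mathrm{CNOT}$ collapse and phase gadget apply verbatim, merely adding more cross-code $\mathrm{CNOT}$s onto the (reused) parity ancilla. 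I expect the main difficulty to be the fault-tolerance and resource bookkeeping rather than the gadget itself: one must check that collapsing mixed $\bar Z/\bar X$ strings never forces a logical Hadamard or any other non-Clifford gate onto the lattice, that each faulty location in the cross-code $\mathrm{CNOT}$s produces at most one logical error on the data code, and that the three ancilla roles really can be packed into three logical qubits rather than four.
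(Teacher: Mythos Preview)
Your route is genuinely different from the paper's. Both of you Trotterize down to elementary factors $e^{-i\theta P}$ with $P$ a logical Pauli string and then push the non-Clifford part onto a Steane-encoded ancilla, but the gadget you use to realize $e^{-i\theta P}$ is not the same. The paper implements $e^{-i\theta P}=\cos\theta\,\mathbb{1}-i\sin\theta\,P$ as a two-term LCU: it prepares a Steane ancilla in $\cos\theta\,\ket{0}-i\sin\theta\,\ket{1}$ by state injection (one extra ancilla), applies the cross-code controlled-$P$, applies $H$ and measures, obtaining $e^{\pm i\theta P}$ with probability $1/2$, and then removes the randomness with one round of oblivious amplitude amplification, which costs a second helper ancilla. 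Thus the paper's three ancillas are LCU control, OAA helper, and state-injection helper; no flag qubit is counted. Your three ancillas (parity register, magic state, flag) therefore do not match the paper's bookkeeping, even if the total happens to be the same.

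Your phase-gadget idea is correct and in fact more elementary than the paper's LCU+OAA machinery, since it is deterministic and needs no amplitude amplification. The weak point is your \emph{first} step: you reduce the multi-qubit $P$ to a single logical Pauli by applying logical $\mathrm{CZ}$ and $\mathrm{CNOT}$ \emph{between different logical qubits of the lattice code}. The paper never does this; every two-qubit gate touching the lattice is a cross-code $\mathrm{CNOT}$/$\mathrm{CZ}$ with the Steane ancilla, whose transversality is explicitly argued. For the Gauss'-law code, neighbouring logical $\bar X_l$ and $\bar X_{l+1}$ share the physical site $S_{l+1}$, so intra-block logical $\mathrm{CNOT}$/$\mathrm{CZ}$ are not obviously transversal and you would owe a separate fault-tolerance argument that you do not sketch. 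This step is also unnecessary: you can run your gadget directly on the full string by preparing the ancilla in $\ket{+}$, applying the \emph{cross-code} controlled-$P$ (a product of cross-code $\mathrm{CZ}$'s and $\mathrm{CNOT}$'s, one per tensor factor of $P$, exactly as in the paper's SELECT), performing $e^{-i\theta \bar X_{\mathrm{anc}}}=\bar H\,e^{-i\theta\bar Z}\,\bar H$ on the Steane ancilla, and undoing the controlled-$P$. This keeps the lattice Clifford-only with only cross-code gates, handles mixed $\bar Z/\bar X$ strings without any intra-lattice $\mathrm{CZ}$, and still avoids OAA.
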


\begin{proof}
    Let us consider the trotterization method to perform the time evolution of the Hamiltonian. By using this method, and remembering that the Hamiltonian contains only Pauli operators, we will need to apply to the system exponentials like $e^{itP}$ where $P$ is a Pauli operator. Note that $e^{itP} = \cos(t) + i\sin(t)P$ for every $P$ in the Pauli group, and we can implement this sum by using LCU. To do so, consider an ancilla in the $7$-qubit code, already prepared in the superposition $\cos(t)\ket{0} - i\sin(t)\ket{1}$. We can use this ancilla to implement $e^{itP}$ or $e^{-itP}$ with probability $1/2$ by using the circuit in the right panel of Fig.~\ref{fig:trotter_exponential} (where the success, and the application of the right exponential, will happen when the measurement outcome will be $\ket{0}$). The ancilla qubit is considered to be encoded in the $7$-qubit code since we already showed before how to do perform the controlled-pauli operator in a falut-tolerant way using that error-correcting code. Then, in order to prepare the ancilla in the right superposition, we will need to perform state injection using another ancilla encoded in the $7$ qubit code as well. The circuit to perform state injection is shown in the left panel of figure \ref{fig:trotter_exponential}.
    
\begin{figure}[h]
    \centering
    \begin{quantikz}[row sep={0.9cm,between origins},column sep={0.3cm}]
       \lstick{$\ket{\psi}$} & \qwbundle{} & \qw & \qw & \targ{} & \meter{} \\
       \lstick{$\ket{0}$} & \qwbundle{} & \gate{H} & \gate{T} & \ctrl{-1} & \gate{SX}\vcw{-1} & \qw & \rstick{$T\ket{\psi}$} \\
    \end{quantikz}
    \begin{quantikz}[row sep={0.9cm,between origins},column sep={0.3cm}]
        \lstick{$\cos(t)\ket{0}-i\sin(t)\ket{1}$} & \ctrl{1} & \gate{H} & \meter{} \\
        \lstick{$\ket{\phi}$} & \gate{P} & \qw \rstick{$e^{itP}\ket{\phi}$} \\
    \end{quantikz}
    \caption{On the left-hand side the circuit performs state-injection on a $7$-qubit code. On the right-hand side, the circuit applies $e^{itP}$ on $\ket{\phi}$ using an ancilla in the right superposition. In both circuits every operation is transversal. The second circuit has a success probability of $1/2$.
    }
    \label{fig:trotter_exponential}
\end{figure}
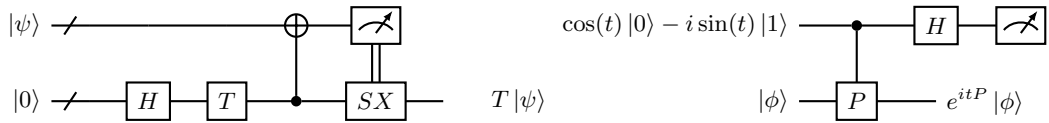
    
    With the circuit we described, we can implement every exponential of the type $e^{itP}$ with success probability of $1/2$, but we can use oblivious amplitude amplification \cite{Berry_2014} to make sure it always succeeds. To apply it, first we have to lower the success probability to $1/4$, which can be done by adding a qubit in the state $\frac{1}{\sqrt{2}}(\ket{0}+\ket{1})$ and by defining success the measured state to be $\ket{00}$ on the two ancilla qubits. So, let us define the $V$ operator as:
    \begin{equation}
    \begin{quantikz}
        \lstick{$\ket{0}$} & \gate[3]{V} & \qw \\
        \lstick{$\ket{0}$} & \ghost{V} & \qw \\
        \lstick{$\ket{\phi}$} & \ghost{V} & \qw \\
    \end{quantikz} = 
    \begin{quantikz}
    \lstick{$\ket{0}$} & \qw & \qw & \gate{H} & \qw \\
    \lstick{$\ket{0}$} & \gate{W(t)} & \ctrl{1} & \gate{H} & \qw \\
    \lstick{$\ket{\phi}$} & \qw & \gate{P} & \qw & \qw \\
    \end{quantikz}
    \label{V_operator}
    \end{equation}
    where the gate $W(t)$ performs the following rotation:
    \begin{equation}
        W(t)\ket{0} = \cos(t)\ket{0} - i\sin(t)\ket{1}
    \end{equation}
    So, as mentioned before, the application of the gate $V$ will success with probability $1/4$ when we measure the two ancilla qubits. Now, let us define the following projection and reflection operators:
    \begin{equation}
    \begin{split}
        \Pi_0 &= \ket{00}\bra{00}\otimes \mathbb{1} \\
        R &= 2\Pi_0 - \mathbb{1} \\
    \end{split}
    \end{equation}
    where the tensor product is ment to distinguish between the two ancilla qubits, and the system encoded in the state $\ket{\phi}$. Note that the reflection operator $R$ is nothing but a control $Z$ gate on the two ancilla qubits (which means it is also hermitian). Finally, we can define the operator for oblivious amplitude amplification $S = -VR V^{\dag}R$. With those definitions, if we apply the operator $SV=-VR V^{\dag} RV$, we will apply on the system (the $\ket{\phi}$ register) the right exponential $e^{itP}$ with probability $1$ (and using a single iteration of the operator $S$).
    
    In the end, we can implement every exponential coming from the trotterization method (fault-tolerantly) using at most $3$ logical ancilla qubits encoded in the $7$ qubit Steane code. Two are the ones of Equation \eqref{V_operator}, while the last one is the one needed to perform state injection and apply the gate $W(t)$.
\end{proof}

We just showed how one can apply $e^{itP}$ on the system fault-tolerantly for every $P$ in the Pauli group. However, even if we used it only to perform the time evolution of the Hamiltonian, this procedure can be used to achieve a universal gate set on our error-correcting code.

\begin{theorem}
    A universal gate set can be performed on the system encoded in the error-correcting code of Lemma \ref{lemma:QLDPC} using at most $3$ logical ancilla qubits initialized in the $7$-qubit code.
\end{theorem}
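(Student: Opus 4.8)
The statement is essentially a corollary of Theorem~\ref{theorem:clifford_only}, so the plan is to bootstrap from the ability, established there, to apply $e^{itP}$ fault-tolerantly for an arbitrary Pauli operator $P$ and an arbitrary angle $t$, using at most $3$ logical ancilla encoded in the $7$-qubit Steane code. First I would recall that, by Lemma~\ref{lemma:QLDPC} and Eq.~\eqref{eq:logic}, the logical operators of the code are $\overline{X}_l$ and $\overline{Z}_l$, and that each of these is itself a Pauli operator on the physical lattice (weight $3$ in the bare construction, and still a Pauli after either concatenation). Consequently any product of logical operators over any subset of logical qubits is again a Pauli on the physical register, and Theorem~\ref{theorem:clifford_only} applies verbatim to $e^{it\overline{P}}$ for every such logical Pauli $\overline{P}$: the non-Clifford cost (the $W(t)$ rotation and the $T$-gate state injection) is confined to the $3$ logical ancilla in the $7$-qubit code, while the lattice only ever sees the Clifford controlled-$\overline{P}$, implemented transversally as in Fig.~\ref{fig:transversal_CNOT_crosscode}.

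Next I would extract a concrete universal generating set from this primitive. From the above we can realize (i) the single-qubit rotations $e^{it\overline{Z}_l}$ and $e^{it\overline{X}_l}$ about two orthogonal logical axes, on every logical qubit $l$, and (ii) the two-qubit rotation $e^{it\overline{Z}_l\overline{Z}_{l'}}$ for every pair $l\neq l'$, since $\overline{Z}_l\overline{Z}_{l'}=Z_{L_l}Z_{L_{l'}}$ is a (possibly high-weight) lattice Pauli. Composing the rotations in (i) via the $Z$-$X$-$Z$ Euler decomposition $U=e^{i\alpha}\,e^{i\beta\overline{Z}_l}\,e^{i\gamma\overline{X}_l}\,e^{i\delta\overline{Z}_l}$ yields an arbitrary single-qubit unitary on each logical qubit (the global phase being irrelevant). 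Taking $t=\pi/4$ in (ii) gives $e^{i\frac{\pi}{4}\overline{Z}_l\overline{Z}_{l'}}$, which is a genuinely entangling gate because $\overline{Z}_l\overline{Z}_{l'}$ is a nontrivial two-qubit Pauli, and which differs from a logical CNOT between $l$ and $l'$ only by single-qubit Clifford corrections already available from (i). I would then close the argument by invoking the standard fact that the set consisting of all single-qubit unitaries together with CNOT between any connected family of qubit pairs is universal; here CNOT is available between \emph{every} pair of logical qubits, so universality on the $dN$-qubit logical register follows, and in fact exactly (arbitrary angles are available), not merely approximately.

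The one place that needs genuine care, rather than a new idea, is the fault-tolerance bookkeeping for the high-weight logical Paulis: when $\overline{P}$ is a product of several logical operators it is a high-weight physical Pauli, so the controlled-$\overline{P}$ inside the $e^{it\overline{P}}$ gadget of Theorem~\ref{theorem:clifford_only} decomposes into many transversal CNOTs from the ancilla, and a single uncorrected fault on the control could propagate onto the lattice. This is handled exactly as in the discussion surrounding Fig.~\ref{fig:transversal_CNOT_crosscode}, by interleaving error-correction cycles (or using flag qubits) between the transversal CNOTs, so the weight growth costs more rounds but not more ancilla. I would also note the two trivial sanity checks: that $\overline{X}_l$ and $\overline{Z}_l$ of Eq.~\eqref{eq:logic} indeed provide two orthogonal axes (immediate, since they anticommute), and that the ancilla count never exceeds the three of Theorem~\ref{theorem:clifford_only}, since each universal building block is a single Pauli exponential and the gates are applied sequentially.
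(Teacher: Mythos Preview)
Your proof is correct, but it takes a slightly different route from the paper. The paper's argument is shorter: it first observes that, because the code of Lemma~\ref{lemma:QLDPC} is CSS, the logical CNOT is already transversal between any two logical qubits, so there is no need to synthesize an entangling gate from the $e^{itP}$ primitive at all. It then invokes Theorem~\ref{theorem:clifford_only} only for the single-qubit non-Clifford part, obtaining $R_Z(\theta)=e^{i\theta Z}$ directly and the Hadamard as $H=X\,R_Y(\pi/2)$ with $R_Y(\theta)=e^{i\theta Y}$. Universality then follows from the standard set $\{\text{Pauli},\text{CNOT},H,R_Z(\theta)\}$.

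Your approach instead extracts the entangling gate from $e^{i\frac{\pi}{4}\overline{Z}_l\overline{Z}_{l'}}$ and builds arbitrary single-qubit unitaries via a $Z$--$X$--$Z$ Euler decomposition. This is perfectly valid and has the minor advantage of not appealing separately to CSS transversality of CNOT, so it would generalize to a non-CSS code with the same $e^{itP}$ gadget. The paper's version is more economical here precisely because the CSS structure is available; in particular it sidesteps the high-weight controlled-$\overline{P}$ bookkeeping you flag for the two-qubit case, since the transversal CNOT requires no ancilla at all. Either way the ancilla count stays at three, so the statement is established.
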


\begin{proof}
    Pauli operations are transversal as well as the CNOT, since the code is CSS. Then, since we are able to apply on the system $e^{i\theta P}$ for every angle $\theta$ and Pauli operator $P$, we can use this to apply operations that are not already transversal. It is trivial to perform arbitrary $Z$ rotations via their definition as $R_Z(\theta) = e^{i\theta Z}$, while we can perform the Hadamard gate by using the definition $H = XR_y(\frac{\pi}{2})$ and $R_y(\theta) = e^{i\theta Y}$.
\end{proof}

\section{Conclusions}
We extended the original idea of Ref.~\cite{Rajput2023} which builds quantum error-correcting codes by exploiting the redundancy given by the Gauge invariance in a Lattice Gauge Theory in several ways. Our first new contribution is the extension of the original proposal for a $\mathbb{Z}_2$ gauge theory in one and two dimensions to arbitrary spatial dimensions while being agnostic on the choice of the phase-flip error-correcting code employed to make the whole protocol fault-tolerant. This leads to either a logarithmic or linear overhead in the number of qubits to perform the error correction, depending whether we want local checks or not. We have then shown that by encoding the Hamiltonian with this error-correcting code, $H$ can be written in terms of the logical operations of the error-correcting code itself, which is equivalent to integrating out fermions. Using this procedure, it becomes easy to transform the original fermionic Hamiltonian into a hardcore-bosonic Hamiltonian. 

The technique of writing the Hamiltonian of a lattice gauge theory in terms of the logical operations allowed in the corresponding Gauss' Law code is a powerful and simple strategy to find dual theories to the original model. Then, we have provided an explicit fault-tolerant algorithm to perform a universal set of operations on our Gauss' Law code and use it to show how to perform Hamiltonian simulation. We described in detail simulations using both the Quantum Signal Processing and Trotterization methods, showing that only a logarithmic and constant number of ancilla qubits is required for the two methods respectively. We have achieved that by showing that, on the error-correcting code exploiting the Gauss' law, every operation needed for Hamiltonian simulation is Clifford and can be done transversally. If the computation will require non-Clifford operations, they will be always done on an ancilla register, that is assumed to have it's own error-correcting code.

Several extensions of the present work are possible including a complete characterization of the Gauss' Law code for larger discrete Abelian groups such as $\mathbb{Z}_N$ as well as the design of error-correcting codes tailored for non-Abelian discrete gauge groups. 
Non-Abelian theories cannot generate a stabilizer group, which is assumed to be an Abelian group, and so one way to solve the problem could be non-stabilizer error-correcting codes. As for increasing the gauge group dimension from $N=2$ to larger values, one can either add qubits to every link of the lattice, or work with qudits-links. The qudit case appears easier form the theoretical point of view, since the Gauss' Law remains in the generalized Pauli group. However, it is not clear what is the real noise model for multi-level systems, and so many questions have still to be addressed even in this case.

\acknowledgments{
We would like to thank P. Hauke, J. Mildenberger and A. Rajput for many useful discussions about the topics discussed in this work. A.R. is funded by the European Union. Views and opinions expressed are however those of the author(s) only and do not necessarily reflect those of the European Union or the European Commission. Neither the European Union nor the granting authority can be held responsible for them. This project has received funding from the European Union’s Horizon Europe research and innovation programme under grant agreement No.\ 101080086 NeQST.
\begin{center}
    \includegraphics[width=0.1\textwidth]{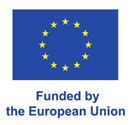}
\end{center}
}

\appendix

\section{Proof of Lemma \ref{lemma:asymptotic_optimality}}
\label{Other_phase_flip}

\begin{proof}[Proof of Lemma \ref{lemma:asymptotic_optimality}]
Consider the Quantum Hamming codes \cite{Steane_1996} with parameters
\begin{equation}
    [[2^r-1, 2^r-1-2r,3]]\;,
\end{equation}
with $r\geq3$. These are CSS quantum error-correcting code made by a concatenation of $2$ classical Hamming codes. If we consider $2^r-1-2r \geq N+dN$, we can use the logical qubits of this code as physical qubits for the classical error-correcting code of Theorem \ref{theorem_gauss_law} (that has parameters $[N+dN,dN,3]$). Since this second code is classical, it corrects only one type of errors by definition and it cannot increase the distance of the Hamming code. Now, in order to choose $r$ we need $2^r-1-2r \geq N+dN$.
We can take for instance
\begin{equation}
\label{eq:rlog}
r=\log_2(N+dN+1)+\log_2\left(1+\alpha\frac{\log_2(N+dN+1)}{N+dN+1}\right)\;,
\end{equation}
so that the condition we need is
\begin{equation}
(\alpha-2)\log_2(N+dN+1)-2\log_2\left(1+\alpha\frac{\log_2(N+dN+1)}{N+dN+1}\right)\geq 0\;.
\end{equation}
We can make sure this is satisfied for any $N\geq1$ by using $\alpha=6$.
The total number of physical qubits is thus
\begin{equation}
\label{eq:num_qub}
2^r-1 = N+dN+1+\alpha\log_2(N+dN+1)\;.
\end{equation}

The Quantum Hamming code can be written as:
\begin{equation}
    [[2^r-1, N+dN, 3]]
\end{equation}
and with the concatenation of the code from Theorem \ref{theorem_gauss_law} we get a code with parameters:
\begin{equation}
    [[2^r-1, dN, 3]]
\end{equation}
Finally, using the choice of $r$ from Eq.~\eqref{eq:rlog} and $\alpha=6$ together with Eq.~\eqref{eq:num_qub} we can write the final parameters of the code as
\begin{equation}
    [[N+dN + 6\log_2(N+dN+1), N+dN, 3]]
\end{equation}
Furthermore, since the concatenation involves a CSS code and a classical code, $X$ and $Z$ stabilizers will never be mixed, so the final code will be again CSS.

\end{proof}

It is possible to build more efficient codes with respect to the one described in the proof of Lemma \ref{lemma:asymptotic_optimality}. This is due to the fact that we do not need in general a full quantum error-correcting code to concatenate the code of Theorem \ref{theorem_gauss_law}. We could instead build a quantum error-correcting code by considering a classical Hamming code for $Z$ errors, and the code of Theorem \ref{theorem_gauss_law} for $X$ errors. This will not modify the scaling but it could improve the qubit overhead needed. However, it could happen in this case that not every error syndrome is unique, and we could need some additional qubits ($O(1)$) to make every error syndrome unique.

An example of such a code is shown in Table \ref{2sites_system_hamming}, where we are considering a $2$ sites $2$ links system with periodic boundary conditions. In this case, we used a classical Hamming code to encode qubits in the phase-flip code, while for the bit-flip we used Gauss' law operators, and only $2$ additional qubits to make every error syndrome of the bit-flip code unique.

\begin{table}[h]
    \centering
    \begin{tabular}{c|ccccccccc|cc}
        \hline
        & $q_1$ & $q_2$ & $q_3$ & $q_4$ & $q_5$ & $q_6$ & $q_7$ & $q_8$ & $q_9$ & $q_{10}$ & $q_{11}$ \\
        \hline
        $S_1$ & X & 1 & X & 1 & X & 1 & X & 1 & X & 1 & X \\
        $S_2$ & 1 & X & X & 1 & 1 & X & X & 1 & 1 & X & X \\
        $S_3$ & 1 & 1 & 1 & X & X & X & X & 1 & 1 & 1 & 1 \\
        $S_4$ & 1 & 1 & 1 & 1 & 1 & 1 & 1 & X & X & X & X \\
        \hline
        \hline
        $\bar Z_1 = A_0$ & Z & Z & Z & 1 & 1 & 1 & 1 & 1 & 1 \\
        $\bar Z_2 = L_0$ & 1 & Z & Z & Z & Z & 1 & 1 & 1 & 1 \\
        $\bar Z_3 = S_1$ & 1 & 1 & 1 & Z & Z & Z & Z & 1 & 1 \\
        $\bar Z_4 = L_1$ & 1 & 1 & 1 & 1 & 1 & Z & Z & Z & Z \\
        $\bar Z_5 = S_0$ & 1 & 1 & Z & Z & 1 & 1 & Z & 1 & 1 \\
        \hline
        $\bar X_1$ & X & 1 & 1 & 1 & 1 & 1 & 1 & 1 & 1 \\
        $\bar X_2$ & X & 1 & X & X & X & 1 & 1 & 1 & 1 \\
        $\bar X_3$ & 1 & 1 & 1 & X & X & 1 & X & X & 1 \\
        $\bar X_4$ & 1 & 1 & 1 & 1 & 1 & 1 & 1 & X & 1 \\
        $\bar X_5$ & 1 & X & X & 1 & 1 & 1 & 1 & 1 & 1 \\
        \hline
        \hline
        $G_1$ & 1 & Z & Z & 1 & 1 & 1 & 1 & Z & Z & 1 & 1 \\
        $G_2$ & 1 & Z & 1 & 1 & Z & Z & 1 & Z & Z & 1 & 1 \\
        $G_3$ & Z & 1 & 1 & Z & Z & 1 & 1 & 1 & 1 & 1 & 1 \\
        \hline
        $G_4$ & 1 & 1 & 1 & Z & 1 & 1 & Z & Z & 1 & 1 & Z \\
        $G_5$ & 1 & 1 & 1 & Z & 1 & 1 & Z & 1 & Z & Z & 1 \\
        \hline
    \end{tabular}
    \caption{Explicit representation of the Hamming phase-flip code, using a $2$ sites system. In this case, we need $2$ additional bit-flip stabilizers to make the code distance $3$. The $S_i$ operators are the phase-flip stabilizers, while $G_i$ labels the bit-flip stabilizers. $G_i$ with $i\le 3$ are built by multiplying logical operations as in the definition of the Gauss' law.}
    \label{2sites_system_hamming}
\end{table}

\section{Bosonic encoding}
\label{bosons_2D}
In this section we report the proof of Theorem~\ref{theorem:H_logical_operations}, as well as the full calculations to write the $1$ and $2$ dimensional Hamiltonian as a function of bosonic degrees of freedom integrating out the fermions. At the end of the section, we also show another possible choice for the logical operations of the error-correcting code, which is non-local and it's harder to deal with. However, we believe that those non-localities raising when writing $H$ in terms of logical operations could cancel non-localities coming from Jordan-Wigner in the $2$ dimensional case, or any way they could be advantageous in some cases.

\subsection{Proof of Theorem \ref{theorem:H_logical_operations}}
\begin{proof}[Proof of Theorem \ref{theorem:H_logical_operations}]
Remember the Hamiltonian of Eq.~\eqref{Full_Hamiltonian}:
\begin{equation}
    H = H_{M} + H_{hop} + H_{E} + H_{P}
\end{equation}
\begin{equation}
\begin{split}
    H_{M} &= m \sum_{\vec{l}} \sigma_{\vec{l}}\; \psi_{\vec{l}}^{\dagger}\psi_{\vec{l}}\\
    H_{hop} &= \epsilon \sum_{\vec{l}}\sum_{\hat k\in \mathcal{D}} \sigma_{\vec{l},\hat k} \left(\psi_{\vec{l}}^{\dagger} Q_{\vec{l},\hat k} \psi_{\vec{l}+\hat k} + \psi_{\vec{l}+\hat k}^{\dagger} Q^{\dagger}_{\vec{l},\hat k} \psi_{\vec{l}}\right)\\
    H_{E} &= \lambda_E \sum_{\vec{l}} \sum_{\hat k\in \mathcal{D}}\left( P_{\vec{l},\hat k} + P_{\vec{l},\hat k}^{\dagger} \right)\\
    H_{M} &= \lambda_P \sum_{p} (W_{p} + W^{\dagger}_{p})
\end{split}
\end{equation}

Since the stabilizer group is generated by the Gauss' Law operator, the Hamiltonian will commute with it. So, the fact that the Hamiltonian can always be written in terms of the logical operations follows trivially since the Gauss' Law and the logical operations are a complete basis of the centralizer of the stabilizer group (set of elements of the Pauli group that commute with the every element of the stabilizer group).

Let us now define the Jordan-Wigner transformation~\cite{Jordan:1928wi}. It is a method to write the fermionic creation and annihilation operators in term of the Pauli matrices, preserving the commutation relations. First, we can define the creation and annihilation operators as the operators $\rvert1\rangle\langle0\lvert$ and $\rvert0\rangle\langle1\lvert$ respectively for even sites, while for odd sites we invert the definition. In terms of Pauli matrices we can write:
\begin{equation}
\begin{split}
    a_{\vec{l}} &= \frac{1}{2} \left(1 + Z_{\vec{l}} \right) X_{\vec{l}} \quad\quad\quad
    a_{\vec{l}}^{\dagger} = \frac{1}{2} \left(1 - Z_{\vec{l}} \right) X_{\vec{l}} \;.
\end{split}
\end{equation}
Looking at the same site, those operators have the right commutation relations:
\begin{equation}
    \{ a_{\vec{l}}, a_{\vec{l}} \} = \{ a_{\vec{l}}^{\dagger}, a_{\vec{l}}^{\dagger} \} = 0 \quad\quad\quad \{ a_{\vec{l}}, a_{\vec{l}}^{\dagger} \} = 1\;.
\end{equation}
However, operators on different sites commute, while fermionic operator should anticommute. To adjust those relations, let us assume for simplicity a $1$-dimensional lattice, we can define the proper fermionic creation and anihilation operators as follows:
\begin{equation}
\begin{split}
    \psi_{l} &= \left( \prod_{i = 0}^{l-1} -Z_{i} \right) a_l \quad\quad\quad
    \psi_{l}^\dag = \left( \prod_{i = 0}^{l-1} -Z_{i} \right) a_l^\dag \\
\end{split}
\end{equation}
In higher spatial dimensions, the definition is straightforwardly generalized by defining an ordering so that we can go through every site with a single index $n(\vec{l})$, one for each of the $N$ sites. In this case the definition of the fermionic operators becomes:
\begin{equation}
\begin{split}
    \psi_{\vec{l}} &= \left( \prod_{k = 0}^{n\left(\vec{l}\right)-1} -Z_{\vec{l}_k} \right) a_{\vec{l}} \quad\quad\quad
    \psi_{\vec{l}}^\dag = \left( \prod_{k = 0}^{n(\vec{l})-1} -Z_{\vec{l}_k} \right) a_{\vec{l}}^\dag \;,
\end{split}
\end{equation}
where $k=n(\vec{l}_k)$. With this definition, it is easy to show that these operators satisfy the right commutation relations:
\begin{equation}
    \{ \psi_{\vec{l}}, \psi_{\vec{m}} \} = \{ \psi_{\vec{l}}^\dag, \psi_{\vec{m}}^\dag \} = 0 \quad\quad\quad \{ \psi_{\vec{l}}, \psi_{\vec{m}}^\dag \} = \delta_{\vec{l}, \vec{m}}\;.
\end{equation}
Using the Jordan-Wigner construction and the $\mathbb{Z}_2$ symmetry we can write $H$ as follows:
\begin{equation}
\begin{split}
    H &= \frac{m}{2} \sum_{\vec{l}} \sigma_{\vec{l}} (1 -(-1)^{|\vec{l}|} Z_{S_{\vec{l}}}) \\
    &+ \frac{\epsilon}{2} \sum_{\vec{l}}\sum_{\hat k\in \mathcal{D}} \sigma_{\vec{l},\hat{k}} O_{\vec{l},\hat{k}} \left( 1 + Z_{S_{\vec{l}}}Z_{S_{\vec{l}+\hat k}} \right) X_{S_{\vec{l}}}\,X_{L_{\vec{l},\hat k}}X_{S_{\vec{l}+\hat k}} \\
    &+ 2\lambda_E\sum_{\vec{l}}\sum_{\hat k\in \mathcal{D}} Z_{L_{\vec{l},\hat k}} +2\lambda_P\sum_{p} X_{p_1} X_{p_2}  X_{p_3} X_{p_4} \\
\end{split}
\end{equation}
where $O_{\vec{l},\hat{k}}$ is the Pauli string given by Jordan-Wigner.
The last term sums over all plaquettes, and the numbering convention is according to Fig.~\ref{fig:plaquetteConvention} as explained in the first chapter of the main text. Recall that, for a $d$-dimensional lattice, the logical operations of the error-correcting code of Theorem~\ref{theorem_gauss_law} are
\begin{equation}
    \overline{Z}_{\vec{l},\hat{k}} = Z_{L_{\vec{l},\hat{k}}} \quad\quad\quad \overline{X}_{\vec{l},\hat{k}} = X_{S_{\vec{l}}}X_{L_{\vec{l},\hat{k}}}X_{S_{\vec{l}+\hat k}}
\end{equation}
while the Gauss' law operator, which for code-words is nothing but the logical identity, is
\begin{equation}
    G_{\vec{l}} = (-1)^{|\vec{l}|} Z_{S_{\vec{l}}}\prod_{\hat k \in \mathcal{D}} Z_{L_{\vec{l},\hat{k}}} Z_{L_{\vec{l}-\hat{k},\hat{k}}}
\end{equation}
From here we can write the Hamiltonian $H$ in terms of these logical operations. First, note that $X$ operators in the hopping term appear already in the form of the logical operation $\overline{X}_{\vec{l},\hat{k}}$, so we can simply substitute it. The $Z$ operators acting on links they translate directly into logical $\overline{Z}_{\vec{l},\hat{k}}$, while for the $Z$ on sites we use Gauss' law to write
\begin{equation}
Z_{S_{\vec{l}}}=(-1)^{|\vec{l}|}G_{\vec{l}}Z_{L_{\vec{l},\hat{k}}} Z_{L_{\vec{l}-\hat{k},\hat{k}}}=(-1)^{|\vec{l}|}Z_{L_{\vec{l},\hat{k}}} Z_{L_{\vec{l}-\hat{k},\hat{k}}}\;.
\end{equation}
The plaquette operator can be obtained by multiplication of $4$ logical operations, so that $X$ operations on sites cancel and we get the 4 $X$ on links. In this way we can write as logical operations every operator in the Hamiltonian. As for the degrees of freedom, we started with a degree of freedom for every link and one for every site (which are $N+dN$), while now the Hamiltonian is written in function of the logical qubits (which we know from Theorem~\ref{theorem_gauss_law} they are $dN$).
\end{proof}

\subsection{Full calculation for the 1-dimensional case}
Let us start from the Hamiltonian written in terms of logical operations
\begin{equation}
    H = \frac{m}{2}\sum_{l} (-1)^l(1 - Z_{l-1} Z_{l}) +\frac{\epsilon}{2} \sum_l \left( 1 -  Z_{l-1} Z_{l+1} \right) X_{l} +2\lambda_E \sum_l Z_l
\end{equation}
and the hardcore bosonic operators
\begin{equation}\begin{split}
    \phi_l &= \frac{1}{2} \left( 1 - Z_l \right) X_l \quad\quad\quad
    \phi^{\dagger}_l = \frac{1}{2} \left( 1 + Z_l \right) X_l \\
\end{split}
\end{equation}
We define the number operator $N_l = \phi_l^{\dagger} \phi_l = \frac{1}{2}(1+Z_l)$ and note that it is idempotent ($N_l = N_l^2$). So the various terms in the Hamiltonian can be written as:
\begin{equation}
    Z_l = 2N_l - 1
\end{equation}
\begin{equation}
\begin{split}
\sum_{l} (-1)^l(1 - Z_{l-1} Z_{l}) &= \sum_{l} (-1)^l(1 - (2N_{l-1}-1) (2N_{l}-1)) \\
    &= -4\sum_l (-1)^l N_{l-1} N_{l} + 2\sum_l (-1)^l N_l + 2\sum_l (-1)^{l-1}N_l \\
    &= -4\sum_l (-1)^l N_{l-1} N_{l} \\
\end{split}
\end{equation}
\begin{equation}
\begin{split}
1 - Z_{l-1} Z_{l+1} &= 1 - (2N_{l-1}-1) (2N_{l+1}-1) \\
    &= 2(N_{l-1} + N_{l+1} - 2N_{l-1} N_{l+1} ) \\
    &= 2(N_{l-1} - N_{l+1})^2 \\
\end{split}
\end{equation}
We note that the disappearance of the terms linear in $N_l$ from the second term is a direct consequence of using Periodic Boundary Conditions, in more general settings boundary terms might still be present. In the end, the Hamiltonian will be:
\begin{equation}
\begin{split}
    H &= 2m \sum_l (-1)^{l+1} N_{l-1} N_{l} \\
    &+ \epsilon \sum_l (N_{l-1}-N_{l+1})^2 (\phi_{l} + \phi_{l}^{\dagger}) +2 \lambda_E \sum_l (2N_l-1) \\
\end{split}
\end{equation}

\subsection{Full calculations for the 2-dimensional case}
Let us consider a lattice with $N_x$ by $N_y$ sites. We start by writing the Hamiltonian as a  function of the Pauli matrices, explicitly taking into account the fact that now Jordan-Wigner has non-local Pauli strings:
\begin{equation}
\begin{split}
    H &= \frac{m}{2} \sum_{i,j} (-1)^{i+j}(1 -(-1)^{i+j} Z_{S_{i,j}}) \\
    &+ \frac{\epsilon}{2} \sum_{i,j} \left( 1 + Z_{S_{i,j}}Z_{S_{i,j+1}} \right) X_{S_{i,j}}X_{L_{i,j}^{x}}X_{S_{i,j+1}} \\
    &+ \frac{\epsilon}{2} \sum_{i,j} (-1)^{j} P_{i,j} \left( 1 + Z_{S_{i,j}}Z_{S_{i+1,j}} \right) X_{S_{i,j}}X_{L_{i,j}^y} X_{S_{i+1,j}} \\
    &+ 2\lambda_E\sum_{i,j,k} Z_{L_{i,j}^k} + 2\lambda_P\sum_{i,j} X_{L^x_{i,j}} X_{L^y_{i,j+1}}  X_{L^x_{i+1,j}} X_{L^y_{i,j}} \\
\end{split}
\end{equation}
where $P_{i,j} =\left( \prod_{n=j}^{N_x} Z_{S_{i,n}} \right) \left( \prod_{n=0}^{j} Z_{S_{i+1,n}} \right)$ is the Pauli string given by vertical links. In these expressions we used, for $\vec{l}=(i,j)$, the following definitions to ease the notation
\begin{equation}
\begin{split}    
S_{\vec{l}} = S_{i,j}\quad\quad L_{\vec{l},(0,1)} = L^{x}_{i,j}\quad\quad L_{\vec{l},(1,0)} = L^{y}_{i,j}\;.
\end{split}
\end{equation}
The logical operations are
\begin{equation}
\begin{split}
    Z^x_{i,j} = Z_{L^x_{i,j}}& \quad\quad Z^y_{i,j} = Z_{L^y_{i,j}} \quad\quad Z_{S_{i,j}} = (-)^{i+j} Z^x_{i,j} Z^y_{i,j} Z^x_{i,j-1} Z^y_{i-1,j}\\
    X_{i,j}^y &= X_{S_{i,j}} X_{L^y_{i,j}} X_{S_{i+1,j}} ~~~~~~~~ X_{i,j}^x = X_{S_{i,j}} X_{L^x_{i,j}} X_{S_{i,j+1}} \\
\end{split}
\end{equation}
and substituting we get:
\begin{equation}
\begin{split}
    H &= \frac{m}{2} \sum_{i,j} (-1)^{i+j}(1 - Z^x_{i,j} Z^y_{i,j} Z^x_{i,j-1} Z^y_{i-1,j}) \\
    &+ \frac{\epsilon}{2} \sum_{i,j} \left( 1 - Z^x_{i,j-1} Z^y_{i,j} Z^y_{i-1,j} Z^y_{i,j+1} Z^x_{i,j+1} Z^y_{i-1,j+1} \right) X_{i,j}^x \\
    &+ \frac{\epsilon}{2} \sum_{i,j} (-1)^{j} P_{i,j}\left( 1 - Z^x_{i,j-1} Z^y_{i-1,j} Z^x_{i,j} Z^x_{i+1,j} Z^y_{i+1,j} Z^x_{i+1,j-1} \right) X_{i,j}^y \\
    &- \lambda \sum_{i,j,k} Z_{i,j}^k + \frac{1}{\lambda}\sum_{i,j} X^x_{i,j} X^y_{i,j+1} X^x_{i+1,j} X^y_{i,j} \\
\end{split}
\end{equation}
This is the Hamiltonian in terms of the logical operations, that can be used to implement the time evolution. Then, we can define the hardcore boson operators in the same way as the $1$-dimensional case
\begin{equation}
\begin{split}
    \phi_{i,j}^k &= \frac{1}{2}(1 - Z_{i,j}^k) X_{i,j}^k \\
    \left(\phi_{i,j}^k\right)^{\dagger} &= \frac{1}{2}(1 + Z_{i,j}^k) X_{i,j}^k \\
\end{split}
\end{equation}
where $k=x,y$. Defining again $N^k_{i,j} = \phi_{i,j}^{k,\dagger} \phi_{i,j}^{k}$ as the number operator we have the following relations
\begin{equation}
    Z_{i,j}^k = 2N^k_{i,j}-1 \quad X_{i,j}^k = \phi_{i,j}^k + \phi_{i,j}^{k,\dagger}
    \label{eq:to_substitute}
\end{equation}
Let us look at the mass term:
\begin{equation}
\begin{split}
    &\sum_{i,j} (-1)^{i+j}(1 - Z^x_{i,j} Z^y_{i,j} Z^x_{i,j-1} Z^y_{i-1,j}) = \\
    &= \sum_{i,j} (-1)^{i+j}[1 - (2N^x_{i,j}-1) (2N^y_{i,j}-1) (2N^x_{i,j-1}-1) (2N^y_{i-1,j}-1)]
\end{split}
\end{equation}
All the other terms can be found by substituting into the Hamiltonian the relations of Equation \ref{eq:to_substitute}. In this way, we will have a Hamiltonian in terms of the bosonic operators and the dependence on sites is implicit, contained in the logical operations.

\subsection{Another choice of logical operations for the 1-dimensional case}

The logical operations can be chosen differently, and here we give an example for the one-dimensional case. However, in general, the previous choice was the easiest to deal with because is local. Indeed, we will see that the choice we will explain now is highly non-local, and also the new degrees of freedom will no longer be bosons.

Enumerating sites and links from $1$ to $N$, the logical operations are the following:
\begin{equation}
\begin{split}
    X_l &= X_{S_{1}} X_{S_l} \prod_{n=l}^{N}X_{L_n} ~~ \forall l>1, ~~~~~~~~ X_1 = \prod_{n=1}^N X_{L_n} \\
    Z_{l} &= (-1)^{l} Z_{S_{l}} ~~ \forall l>1, ~~~~~~~~ Z_{1} = Z_{L_1}
\end{split}
\end{equation}
which can be inverted using the stabilizers which will be then simplified, leading to:
\begin{equation}
\begin{split}
    Z_{L_l} &= \prod_{n=1}^{l} Z_{n} \\
    Z_{S_l} &= (-1)^{l} Z_{l} ~~~~~~~~ Z_{S_1} = \prod_{n=2}^N Z_{n} \\
    X_{S_{l}}X_{L_l}X_{S_{l+1}} &= X_{l} X_{l+1} ~~~~~~~~ X_{S_{N}}X_{L_N}X_{S_1} = X_N
\end{split}
\end{equation}

Now, let us define the following operators:
\begin{equation}
\begin{split}
    \phi_l &= \frac{1}{2}\left( 1 + \prod_{n=1}^l (-1)^{n-1}Z_n \right) X_l \\
    \phi_l^{\dagger} &= \frac{1}{2}\left( 1 - \prod_{n=1}^l (-1)^{n-1}Z_n \right) X_l \\
\end{split}
\end{equation}
which satisfy the following commutation relations
\begin{equation}
\begin{split}
    &\{\phi_l, \phi_l\} = 0 = \{\phi_l^{\dagger}, \phi_l^{\dagger}\} ~~~~~~~~ \{ \phi_l^{\dagger}, \phi_l \} = 1\\
    &\{ O_i, O_j \}_{i<j} = O_i (\phi_j^{\dagger} + \phi_j) \\
\end{split}
\end{equation}
Where $O_i$ is either $\phi_i$ or $\phi_i^{\dagger}$. With those definitions, it is easy to prove that:
\begin{equation}
\begin{split}
    Z_{S_l} &= 1-2\phi^\dagger_l \phi_l \\
    Z_{L_l} &= (-1)^l (1-2\phi_l^\dag \phi_l)(1-2\phi_{l-1}^\dag \phi_{l-1}) \\
    X_l &= \phi_l^\dag \phi_l \\
\end{split}
\end{equation}

From here, we can already see that everything is local, while substituting into the Hamiltonian we get:
\begin{equation}
\begin{split}
    H &= m\sum_{l=1}^N (-1)^l\frac{1}{2}(1-(-1)^lZ_{S_l}) + \epsilon \sum_{l=1}^N \frac{1}{2}(1+Z_{S_l}Z_{S_{l+1}})X_{S_l}X_{L_{l}}X_{S_{l+1}} + 2\lambda_E \sum_{l=1}^N Z_{L_l} \\
    &= m \sum_{l=1}^{N} \frac{1}{2}\left( (-1)^l - (1-2\phi_l^{\dagger}\phi_l)(1 - 2\phi_{l-1}^{\dagger}\phi_{l-1}) \right) \\
    &+ 2\epsilon \sum_{l=1}^N \left( 1- \phi_{l-1}^{\dagger}\phi_{l-1} - \phi_{l+1}^{\dagger}\phi_{l+1} + 2\phi_{l-1}^{\dagger}\phi_{l-1}\phi_{l+1}^{\dagger}\phi_{l+1}\right) (\phi_l^\dag + \phi_l)(\phi_{l+1}^\dag + \phi_{l+1}) \\
    &+ \lambda_E \sum_{l=1}^N (1 - 2\phi_l^{\dagger}\phi_l)
\end{split}
\end{equation}

One possible advantage of this choice of logical operations is that, since $\phi$ is non-local, in some cases it could be used to simplify some non-locality raising from other things (like Jordan-Wigner or other encodings).


\end{document}